\newcolumntype{C}{>{\centering\arraybackslash}X} 
\definecolor{SP}{RGB}{30,160,156}
\renewcommand{\qedsymbol}{}
\newtheorem{lemma}{Lemma}
\DeclareMathOperator*{\argminB}{argmin}
\def\BibTeX{{\rm B\kern-.05em{\sc i\kern-.025em b}\kern-.08em
    T\kern-.1667em\lower.7ex\hbox{E}\kern-.125emX}}
\begin{document}
\pagenumbering{arabic} 
\title{Calibrating Wireless AI via Meta-Learned Context-Dependent Conformal Prediction}
\author{\IEEEauthorblockN{Seonghoon Yoo,~\IEEEmembership{Graduate Student Member,~IEEE,}  Sangwoo Park,~\IEEEmembership{Member,~IEEE,}\\  Petar Popovski,~\IEEEmembership{Fellow,~IEEE,}  Joonhyuk Kang,~\IEEEmembership{Member,~IEEE,}  and\\ Osvaldo Simeone~\IEEEmembership{Fellow,~IEEE}}
\IEEEauthorblockA{ }
}
\maketitle

\begin{abstract}
Modern software-defined networks, such as Open Radio Access Network (O-RAN) systems, rely on artificial intelligence (AI)-powered applications running on controllers interfaced with the radio access network. To ensure that these AI applications operate reliably at runtime, they must be properly calibrated before deployment. A promising and theoretically grounded approach to calibration is conformal prediction (CP), which enhances any AI model by transforming it into a provably reliable set predictor that provides error bars for estimates and decisions. CP requires calibration data that matches the distribution of the environment encountered during runtime. However, in practical scenarios, network controllers often have access only to data collected under different contexts -- such as varying traffic patterns and network conditions -- leading to a mismatch between the calibration and runtime distributions. This paper introduces a novel methodology to address this calibration-test distribution shift. The approach leverages meta-learning to develop a zero-shot estimator of distribution shifts, relying solely on contextual information. The proposed method, called meta-learned context-dependent weighted conformal prediction (ML-WCP), enables effective calibration of AI applications without requiring data from the current context. Additionally, it can incorporate data from multiple contexts to further enhance calibration reliability.
\end{abstract}
\begin{IEEEkeywords}
Wireless AI, calibration, conformal prediction, contextual information, meta-learning, O-RAN
\end{IEEEkeywords}
\clearpage
\section{Introduction}
\begin{figure*}[h]
    \centering
    \includegraphics[width=0.7\linewidth]{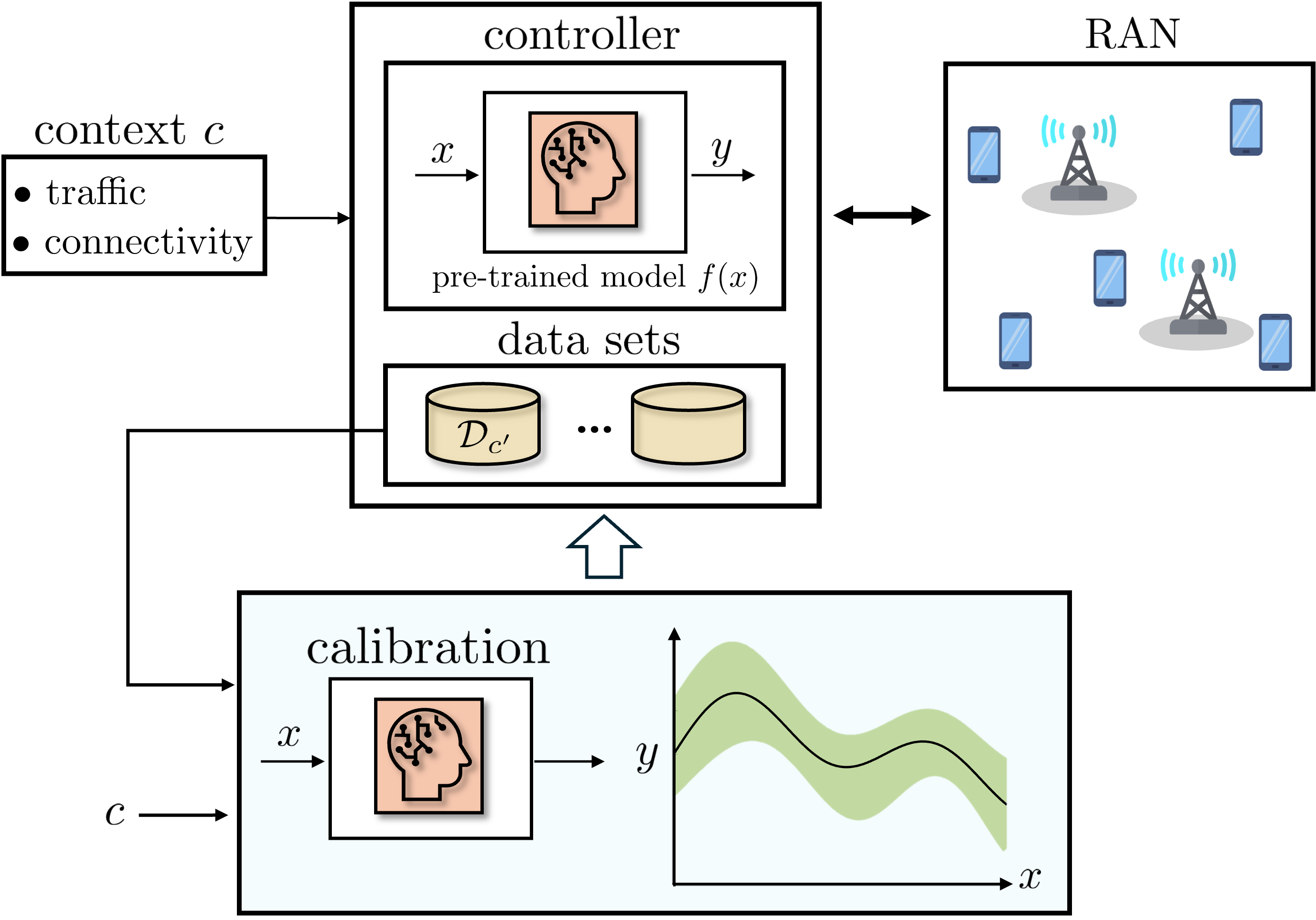}
    \caption{A controller at the cloud or at the edge runs pre-trained AI apps selected based on contextual information $c$ about traffic and connectivity conditions. The app uses data $x$ received from the RAN in order to make decisions $y$ that may affect the operation of the RAN. The goal of this paper is to calibrate the pre-trained AI app prior to deployment with the aim of ensuring reliability. Reliability is obtained by assigning statistically valid error bars -- more generally, prediction sets -- to the decision $y$. Calibration can use data available at the controller that was collected under different contexts $c'$.}
    \label{fig1}
\end{figure*}
\subsection{Context and Motivation}

Modern software-defined networks, such as Open Radio Access Network (O-RAN) systems, rely on artificial intelligence (AI)-powered applications running on controllers interfaced with the radio access network \cite{Abdalla2022oran, Latah2019ai_apps}. As illustrated in Fig. \ref{fig1}, given contextual information encompassing current traffic and connectivity conditions, along with the network operator’s intent, controllers select and deploy pre-trained AI applications. The AI applications acquire data from, and potentially provide feedback to, the radio access network (RAN).

For example, AI in O-RAN can enable functions such as physical layer (PHY) optimization, including adaptive modulation and coding, beamforming, and power control \cite{Bonati2021phy_functions}; as well as higher-layer functions like scheduling, handover management, and resource allocation optimization \cite{qazzaz2024higher_layer}. These functions improve network efficiency, reduce latency, and enhance user experience. Moreover, AI applications can analyze real-time network conditions and dynamically adjust policies, ensuring efficient adaptation to traffic changes and reliable service quality \cite{Brik2024oran_automation, Hamdan2023ORAN_automation}.

To ensure that AI applications operate reliably at runtime, they must be properly calibrated before
deployment. Calibration broadly refers to the selection of hyperparameters or post-processing mechanisms
that provide guarantees of reliability during operation \cite{guo2017calibration, kuleshov2018accurate, marx2022modular, angelopoulos2024therory_CP, shafer2008CP,balasubramanian2014cp_book, YANG2020hyperparam_review}. For instance, for a top-$K$ traffic predictor, it is critical to choose the list size, $K$, that ensures a sufficiently large coverage probability for the traffic class \cite{Angelopoulos2020Top-k}. This way, downstream applications can make a reliable use the prediction, being aware of other plausible future outcomes. 

Given the increased use of AI in sensitive domains, such as telecommunications and engineering, the problem of AI calibration has garnered significant attention in recent machine learning literature \cite{Minderer2021NEURIPS, wang2024ml_calibration}. A flexible and theoretically grounded approach to calibration is \textit{conformal prediction (CP)}. CP augments any AI model by transforming it into a provably reliable set predictor that provides error bars for estimates
and decisions \cite{angelopoulos2024therory_CP}. Specifically, as seen in Fig. \ref{fig1}, CP optimizes a post-processing mechanism to add error bars – more generally, prediction sets – to the outputs of an AI model. The working principle of CP involves estimating the distribution of errors made by the AI model through a held-out \emph{calibration data set} to determine the size of error bars that are likely to include the correct, or optimal, output.

However, in practical scenarios, network controllers typically have access only to data collected under different contexts -- such as varying traffic patterns and network conditions -- that do not necessarily match runtime conditions. This leads to a mismatch between the calibration data distribution and the runtime data distribution, impeding a direct application of CP. 

CP has been recently extended to handle distribution shifts between calibration and testing \cite{angelopoulos2024therory_CP}. The seminal paper \cite{Tibshirani2019Dist_shift} considered \emph{covariate shifts}, i.e., discrepancies between input distributions in the calibration and runtime phases, that are known \emph{a priori}. The authors of \cite{Tibshirani2019Dist_shift} show how to apply importance weighting to ensure the statistical validity of the prediction set despite the covariate shift. The resulting scheme is known as \emph{weighted conformal prediction} (WCP). Several works have also studied the case of an \emph{unknown covariate shift} \cite{Lei2021TV, Yang2024DoublyRobustCP}. More general distribution shifts, encompassing also concept shifts have been considered in \cite{prinster2024conformal, barber2023conformal, cauchois2024robust}. While guaranteeing reliable error bars, WCP can yield large error bars, unless the underlying model and error (score) functions are suitably designed  \cite{Lei2021TV, hou2024likelihood}.


In this work, we study the problem of calibration of AI apps for wireless systems. Unlike existing studies,
in a wireless system, distribution shifts are typically unknown, and there are no additional data available
for the current context to estimate the distribution shift. For example, the traffic statistics may change over time, making data collected under past settings not directly representative of current traffic conditions. 

To address this calibration in the presence of unknown distribution shifts, we introduce a novel
methodology that leverages meta-learning \cite{Sangwoo2023meta-learning} to develop a zero-shot estimator of distribution shifts, relying
solely on contextual information. Contextual variables describe the network operational
conditions in terms of traffic – e.g., number of users, the types of services being delivered, the cell average
loads, and mobility levels –, as well as of connectivity – e.g., topology, fronthaul capacities, average
signal-to-noise ratio (SNR) conditions, and multi-modal sensory data from cameras, GPS, or Radar. While current labeled data may not be available, contextual information is typically inherently accessible by a network controller \cite{Wang2024context_controller, Semiari2015Context}. The proposed method, called \textit{meta-learned context-dependent weighted conformal prediction (ML-WCP)}, enables the effective calibration of AI applications without requiring data from the current context. Furthermore, it can incorporate data from multiple contexts to enhance calibration reliability.

\subsection{Related Works}
Here, we briefly review additional relevant papers related to
the theme of this work.

\textit{Reliable uncertainty quantification for AI in wireless systems}: While AI has shown great potential to enhance communication systems \cite{Letaief2019AI,Zappone2019AI}, its practical deployment has been hindered by concerns on reliability and verification \cite{guo2020explainable}. Bayesian learning \cite{simeone2022machine} provides a principled framework to quantify the uncertainty of AI models, and it has been successfully applied to wireless systems \cite{cohen2021learning, raviv2023modular, tedeschini2024real}. However, Bayesian learning is limited by its sensitivity to prior and model misspecifications and by the high computational complexity \cite{knoblauch2019generalized, Zecchin2023Bayesian}. Furthermore, Bayesian learning is not applicable to the scenarios in which only a pre-trained AI model is available to the system. In contrast, post-hoc calibration methods \cite{guo2017calibration} can reliably quantify the uncertainty of pre-trained AI models using a held-out data set. Notably, CP \cite{vovk2005cp} yields provably valid prediction sets around the decisions made by AI models. 

\textit{CP in wireless systems:} 
The application of CP to wireless systems has been demonstrated for functionalities including demodulation \cite{Cohen2023WirelessCP}, network quality estimation \cite{jiang2024learning}, and traffic prediction \cite{ma2023metastnet}. WCP has been leveraged to address the counterfactural estimation of key performance indicators for wireless AI apps \cite{hou2024likelihood}. Online versions of CP \cite{gibbs2021adaptive, feldman2022achieving} have been considered for settings with feedback, including scheduling \cite{Cohen2023URLLC_CP}.

\subsection{Main Contributions}
This paper addresses the challenge of calibrating AI applications in wireless systems by leveraging contextual information, making the following main contributions:
\begin{itemize}
    \item \textit{Novel context-based zero-shot calibration methodology}: This paper introduces ML-WCP, a methodology for estimating calibration-test distribution shifts without requiring runtime data. This scheme enables zero-shot calibration using only contextual information, and is applicable to any setting in which calibration data are available from multiple contexts. ML-WCP builds on meta-learning and on efficient symmetry-based neural model architectures, and is capable of integrating data from multiple contexts.
   \item \textit{Applications to wireless systems}: We demonstrate three applications of ML-WCP across different layers of a communication network: traffic slice prediction at the network layer, scheduling apps profiling at the medium access control (MAC) layer, and interference-limited communication at the physical layer.
    \item \textit{Experimental validation}: The performance of ML-WCP is validated through extensive experiments, supporting the theoretical results that coverage performance depends on the quality of the covariate likelihood estimator.  
\end{itemize}

\subsection{Organization}
The remainder of the paper is organized as follows. Section \ref{Problem Definition} defines the problem, covering context-dependent data generation and examples for wireless applications. Section \ref{Background} summarizes the necessary background material on CP and WCP. Section \ref{Meta-Learned Context-Dependent Weighted Conformal Prediction} presents the proposed ML-WCP, and Section \ref{ML-WCP with Multi-Context Calibration} extends ML-WCP to multi-context calibration. The experimental setting and results are described in Section \ref{Simulation results}. Finally, Section \ref{Conclusion} summarizes the paper and points to directions for future work.

\section{Problem Definition} \label{Problem Definition}
\begin{figure*}[t]
    \centering
    \includegraphics[width=0.9\linewidth]{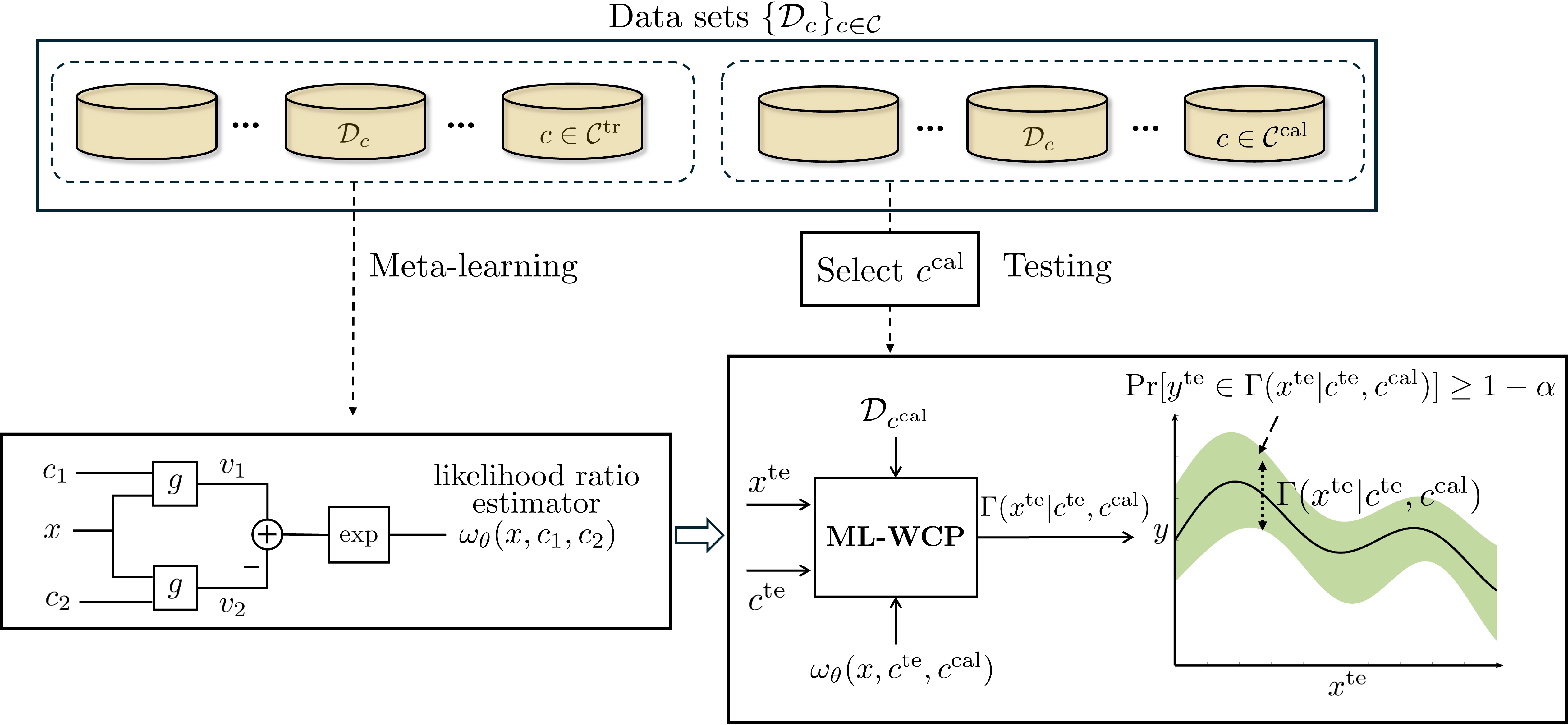}
    \caption{(a) Meta-learning: Meta-learned context-dependent weighted conformal prediction (ML-WCP) leverages calibration data from multiple contexts to meta-learn a zero-shot covariate likelihood ratio estimator $\omega_{\theta}(x,c_1,c_2)$. (b) Testing: Given a selected calibration set $c^\text{cal}$, ML-WCP provides a set predictor $\Gamma(x^\text{te}|c^\text{te},c^\text{cal})$ that aims at guaranteeing coverage for an input $x^\text{te}$ from a new context $c^\text{te}$.}
    \label{fig2}
\end{figure*}
This paper considers the scenario in Fig. \ref{fig1}, in which a controller runs a pre-trained app at the cloud or at the edge. The controller has access to a number of data sets collected during the past operation of the system. The goal is to leverage the available data in order to calibrate the pre-trained AI model via a low-complexity post-processing mechanism operating on the model's outputs. This section presents the problem under study, while also illustrating a number of example applications to wireless networks, which will be further elaborated on in Section \ref{ex_wireless communication}.

\subsection{Calibrating a Pre-Trained App}
The controller selects a  \textit{pre-trained model} $f(x)$ that associates a decision $y \in \mathcal{Y}$ to any input $x \in \mathcal{X}$. We will refer to the decision $y$ also as a \emph{prediction}, although function $f(x)$ may carry out other forms of decision making beyond prediction. We are interested in designing low-complexity \emph{post-hoc calibration} strategies that leverage the model's output, $f(x)$, to produce decisions associated with a \emph{reliability guarantee}. Specifically, following the CP framework \cite{angelopoulos2024therory_CP,Anastasios2023CPintro,vovk2005cp,shafer2008CP}, we wish to produce \emph{prediction set} that are guaranteed to include the true, or optimal, output $y$ with a user-defined probability.



To facilitate calibration, we assume that, at runtime, side information about the current network conditions is available in the form of \emph{contextual variables} $c$. The context variable $c$ is a vector of features taking values in some set $\mathcal{C}$. Importantly, the available pre-trained predictive model $f(x)$ is not specialized to context $c$, as this information is only available at runtime, but this information can be leveraged to post-process the output model $f(x)$ with the goal of enhancing calibration.  

 Given a test input $x^\text{te}$ and the corresponding test context $c^\text{te}$, the goal of calibration is to use the pre-trained predictor $f(x)$ to evaluate a prediction set $\Gamma(x^\text{te}) \subseteq \mathcal{Y}$ with the property that the true optimal output  $y^\text{te}$ is included in the set with probability at least $1-\alpha$, i.e.,
\begin{equation}
    \text{Pr}[y^\text{te}\in \Gamma(x^\text{te}|c^\text{te})] \geq 1-\alpha,
    \label{set_predictor}
\end{equation}
where $\alpha$ is the user-defined miscoverage level. For example, the set $\Gamma(x^\text{te}|c^\text{te})$ may take the form of error bars, or confidence intervals, when $y$ is a continuous quantity, while corresponding to a subset of plausible options in the case of a discrete output set $\mathcal{Y}$.

Since the condition (\ref{set_predictor}) can be always guaranteed by choosing the uninformative prediction set $\Gamma(x^\text{te})=\mathcal{Y}$, it is necessary to evaluate the performance of the prediction set in terms of its average size, referred to as \emph{inefficiency}, which is defined as the expectation 
\begin{equation}
  \textnormal{inefficiency}(\Gamma)=\mathbb{E}\big[|\Gamma(x^\text{te}|c^\text{te})|\big].
  \label{inefficiency}
\end{equation}
To this end, calibration aims at ensuring condition (\ref{set_predictor}), while keeping the inefficiency as low as possible.

As discussed later in this section, the set $\Gamma(x^\text{te}|c^\text{te})$ can be constructed by using data available at the controller that was collected under contexts, i.e., network and connectivity conditions, that are different from the one represented by the current context vector $c^\text{te}$.

\subsection{Data Distribution}

Given a context $c \in \mathcal{C}$, the distribution of an input-output data pair  $({x},y)$ is denoted as $p({x},y|c)$. Specifically, we assume a \emph{covariate-shift} setting in which the context-conditional distribution factorizes as 
\begin{equation}
    p(x,y|c)=p(x|c)p(y|x).
    \label{pb_condition}
\end{equation} 
According to the equality (\ref{pb_condition}), the conditional distribution $p(y|x)$ remains constant across contexts, while the distribution of the input, or covariate $x$, $p(x|c)$, determines variations in the joint distribution 
$p(x,y|c)$ as a function of the context $c$. The assumption (\ref{pb_condition}) implies that there exists an ideal  model $p(y|x)$ that describes the relationship between input and output across all contexts $c$, although the likelihood of different inputs $x$, modeled by $p(x|c)$, varies as a function of the context variables $c$. 

The covariate-shift working assumption (\ref{pb_condition}) is aligned with the use by the controller of a pre-trained model $f(x)$, which does not depend on the context $c$. In fact, under this assumption, it is reasonable to adopt a single model $f(x)$ for different contexts. However, this assumption also implies that the model $f(x)$ is not to tailored to the context-specific operating conditions determined by the input distribution $p(x|c)$. Specifically, the model $f(x)$ may be more or less effective depending on the current context-specific distribution $p(x|c)$, which determines which inputs are more likely to be observed. For instance, a traffic classification model trained with data from high mobility users may have suboptimal performance in settings with static users. The role of the calibration step is to account for the errors made by the model as a function of the context $c$ so as to identify reliable prediction sets.

\subsection{Calibration Data}
\label{calibration data}
As illustrated in Fig. \ref{fig1}, in order to determine the prediction set $\Gamma(x^\text{te}|c^\text{te})$,  we assume that the controller can leverage data sets of the form
\begin{equation}
    \mathcal{D}_c=\{(x[i],y[i])\}_{i=1}^{|\mathcal{D}_c|},
    \label{dataset}
\end{equation}
which are collected under contexts $c$ belonging to some set  $\mathcal{C}$. Accordingly, each $i$-th sample consists of an input-output pair $(x[i],y[i])$ that follows the conditional distribution $p(x,y|c)$.

The runtime, or test, context $c^\text{te}$ is generally not included in the set $\mathcal{C}$ of contexts for which data are available at runtime for calibration. Therefore, by the assumption (\ref{pb_condition}), the test data distribution $p(x,y|c^\text{te})$ is different from all the distributions $p(x,y|c)$, with $c\in \mathcal{C}$, for which data are available. This creates the challenge of accounting for the covariate shift between calibration and testing data distributions.

At runtime, given the test context $c^{\mathrm{te}}$, the controller selects some data sets $\mathcal{D}_{c'}$ for given contexts $c' \in \mathcal{C}$. The selection of the contexts $c' \in\mathcal{C}$ can be done using an arbitrary criterion involving only the context variables. We start by considering the selection of a single calibration context $c^\text{cal}\in \mathcal{C}$, and we study the extension to multiple calibration contexts in Section \ref{ML-WCP with Multi-Context Calibration}.

The calibration context $c^\text{cal}$, along with the corresponding calibration data set $\mathcal{D}_{c^\text{cal}}$, can be selected based on a given distance measure $d(\cdot,\cdot)$ in the space of contexts. In practice, the controller can choose the context $c^\text{cal}$ as the context in set $\mathcal{C}$, or a subset thereof (see Section \ref{Meta-Learned Context-Dependent Weighted Conformal Prediction}), at the minimum distance from $c^\text{te}$, i.e.,
\begin{equation}
    c^\text{cal} = \argminB_{c \in \mathcal{C}} d(c^\text{te}, c).
    \label{single_cal_select}
\end{equation}


The data set $\mathcal{D}_{c^\text{cal}}$ is used to determine the prediction set $\Gamma(x|c^\text{te})$ through a low-complexity mechanism to be discussed in the next section. Accordingly, we adopt the more detailed notation $\Gamma(x|c^\text{te},c^\text{cal})$. Given the dependence of the prediction set on contexts $c^\text{te}$ and $c^\text{cal}$, the probability in (\ref{set_predictor}) and the expectation in (\ref{inefficiency}) are evaluated with respect to the joint distribution of test and calibration data given the respective context vectors, i.e., 
\begin{equation}
p(\mathcal{D}_{c^\text{cal}},x^\text{te},y^\text{te}|c^\text{te},c^\text{cal})=p(x^\text{te},y^\text{te}|c^\text{te})\cdot p(\mathcal{D}_{c^\text{cal}}|c^\text{cal}),
\label{joint_dist_eq}
\end{equation}
with
\begin{equation}
    p(\mathcal{D}_{c^\text{cal}}|c^\text{cal})=\prod_{i=1}^{|\mathcal{D}_{c^\text{cal}}|}p(x[i],y[i]|c^\text{cal})
    \label{c_cal_eq}
\end{equation} accounting for the standard assumption of independent identically distributed (i.i.d.) data.



\subsection{Examples}
\label{ex_wireless communication}
\begin{figure}[t]
     \centering
     \begin{subfigure}[t]{0.7\textwidth}
         \centering
         \includegraphics[width=\textwidth]{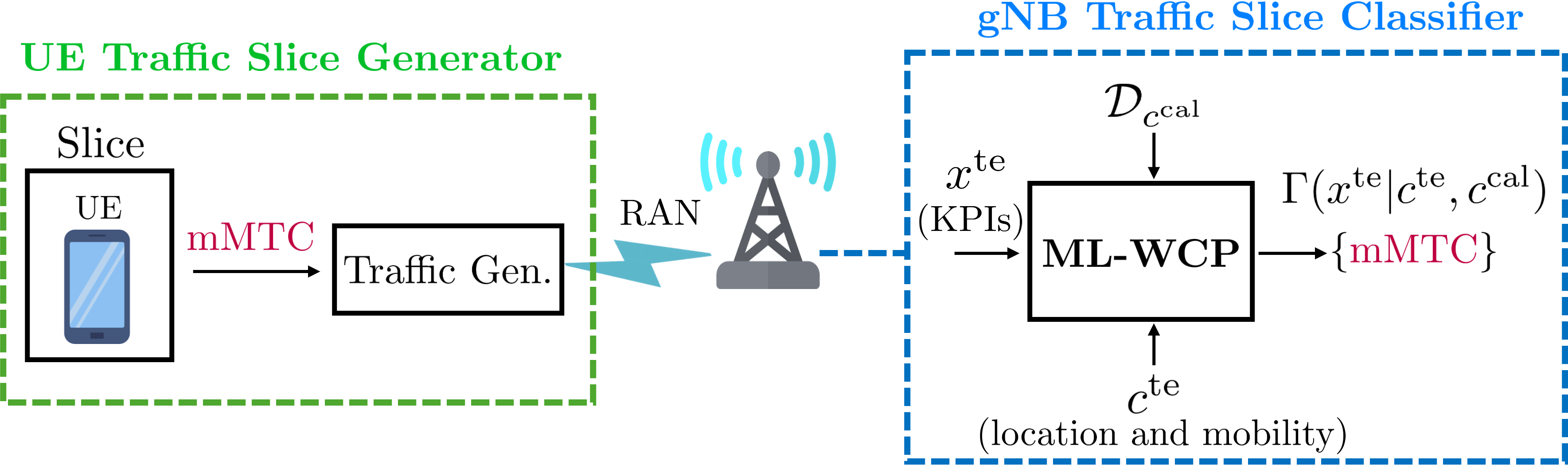}
         \caption{}
         \label{app1}
     \end{subfigure}\\
     \begin{subfigure}[t]{0.75\textwidth}
         \centering
         \includegraphics[width=\textwidth]{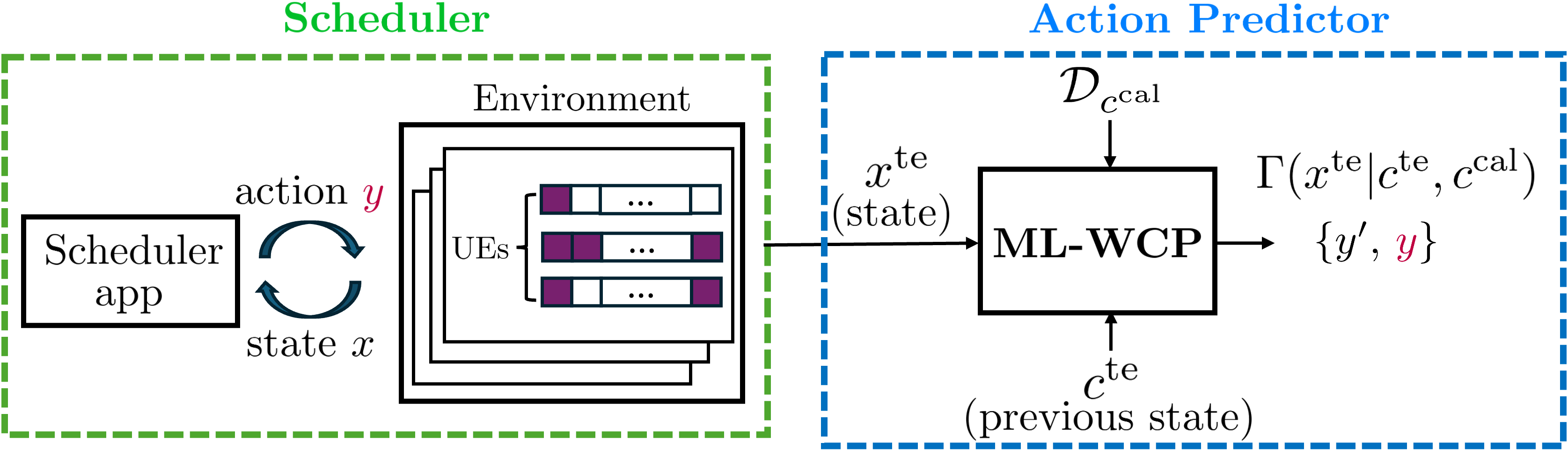}
        \caption{}
         \label{app2}
     \end{subfigure}\\
     \begin{subfigure}[t]{0.75\textwidth}
         \centering
         \includegraphics[width=\textwidth]{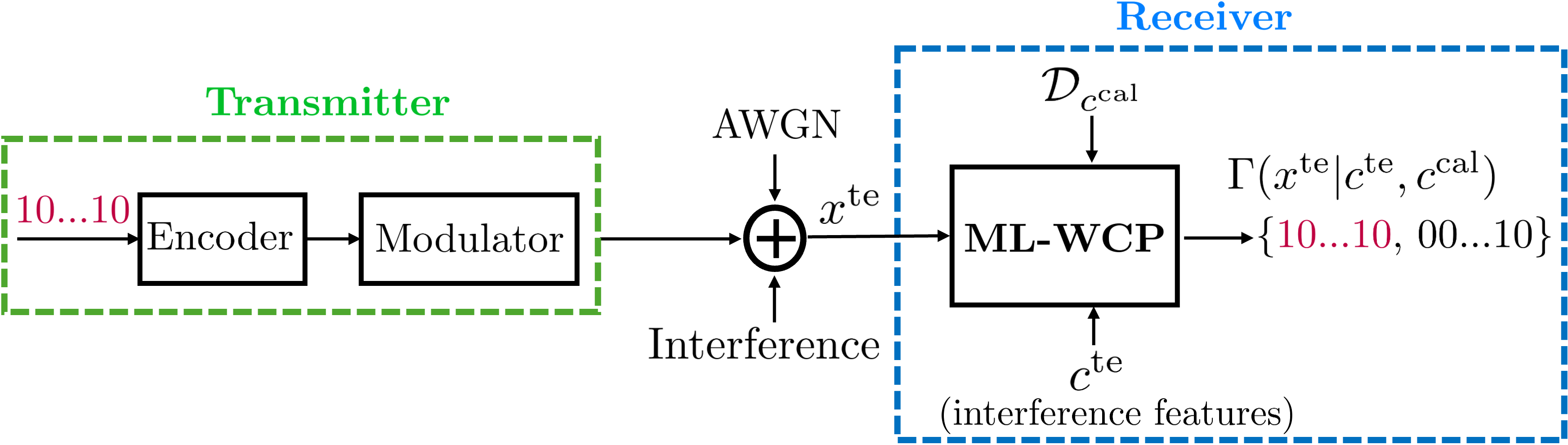}
        \caption{}
         \label{app3}
     \end{subfigure}
        \caption{Examples of applications of ML-WCP to wireless systems: (a) traffic slice prediction, (b) profiling medium access control scheduling apps, and (c) interference-limited physical-layer communication.}
        \label{application}
\end{figure}

  To illustrate the calibration framework studied in this work, we now present some examples of applications at different layers of a communication network that will be further elaborated on in Section \ref{Simulation results}.
\subsubsection{Traffic slice prediction}
\label{ex_traffic prediction}
For a network-layer application, consider a traffic slice classification task in an O-RAN system \cite{Chowdbury[2024]TrafficPred., Petar2023ORAN}. As illustrated in Fig. \ref{application}(a), in this problem, the goal is to estimate the type of traffic produced by a user equipment (UE), which may be enhanced mobile broadband (eMBB), massive machine-type communication (mMTC), ultra-reliable low-latency communication (URLLC), or control. The input covariates used by the pre-trained predictor encompass a number of measured key performance indicators (KPIs). The context $c$ refers to additional information available at the controller at runtime about the location and mobility of the UE whose traffic is being classified. For example, using the format of the data set provided by \cite{Lei2021TV}, the context $c$ may encompass the UE's general location, i.e., campus or residential, as well as the mobility status of the UE, i.e., stationary, walking, or driving. In this case, the distribution $p(x|c)$ describes the properties of the KPIs for UEs with given location and mobility levels. Furthermore, the invariant distribution $p(y|x)$ describes the relationship between the classification decision $y$ and the observed KPIs $x$. The goal is to calibrate the classifier, so that it produces a list of possible traffic types that includes the true type with high probability.

\subsubsection{Profiling medium access control scheduling apps}
\label{ex_mac scheduling}
In an open architecture like O-RAN, it is often important to profile the third-party apps in order to understand and evaluate their operations \cite{Hoffman2024third_party, santos2024third_party}. For example, consider scheduling an AI app for the MAC layer. As illustrated in Fig. \ref{application}(b),  the scheduling app selects a subset of UEs for transmission based on their KPI requirements, backlogs, and channel quality indicators (CQIs). To profile the app, the controller wishes to calibrate a predictive model for the decision made by the scheduler. The predictive model can be useful for diagnostic and monitoring purposes \cite{Brik2024oran_automation}. 

At runtime, the controller has access to context variables $c$ that provide information about the backlogs and CQIs, at the previous scheduling interval. This information yields the distribution $p(x|c)$ over the current backlog and CQI $x$. The invariant distribution $p(y|x)$ describes the action $y$ of the app given the current setting $x$.   

\subsubsection{Interference-limited physical-layer communication}
\label{ex_interference_limited_communication}
Consider an interference-limited communication scenario in which interference occurs in bursts \cite{Fertonani2009Inf}. As illustrated in Fig. \ref{application}(c), a decoding app $f(x)$ is selected to map the received baseband signal $x$, a complex vector, into an estimate of the sequence of information bits $y$.  The decoding app is designed for an additive-Gaussian noise channel using a minimum-distance decoding metric, and is not tailored to specific interference conditions. Thanks to calibration, we wish to produce a \emph{list} of possible information messages that is likely to include the true message \cite{guruswami2004list, Viterbo2021ListViterbi}.


Information about the interference may become available at runtime, e.g., via the use of sensors or spectrum detection algorithms \cite{NING2015Spectrum, Semiari2015Context}. These contextual variables may accordingly describe characteristics such as the probability of an interference burst within a frame and the expected duration of an interference burst. The context vector $c$ can be more or less informative depending on its level of specificity. For instance, a context vector $c$ may include only the probability of occurrence of an interference burst in a frame and the duration of the burst; while a more informative context vector $c$ may include a binary indicator for whether interference occurs, or not, along with the exact duration of the interference burst. This information can be used to calibrate the post-processing block that produces a list of messages based on the decoder's output. 

In this setting, the context-specific distribution $p(x|c)$ accounts for the dependence of the received signal $x$ on the properties of the interference $c$. Furthermore, the invariant distribution $p(y|x)$ describes the operation of a universal decoder, operating without knowledge of the context $c$. It is noted that in this example, the working assumption of a context-invariant distribution $p(y|x)$ does not hold, since the optimal decoder generally depends on $c$. Nevertheless, despite this modeling mismatch, the results in Section \ref{Interference-Limited Communication} will demonstrate that the proposed approach is still effective at calibrating a pre-trained decoder model $f(x)$. Thus, this setting provides a benchmark to evaluate the impact of a modeling mismatch.

\section{Background} \label{Background}
In this section, we review conventional calibration via CP \cite{vovk2005cp}, as well as via its extension WCP \cite{Tibshirani2019Dist_shift} which can operate under covariate shifts.
\subsection{Conformal Prediction}
\label{Conformal Prediction}
Using the output of the pre-trained model $f(x^\text{te})$ for the test input $x^\text{te}$, CP evaluates a negatively oriented score $S(x^\text{te},y)$ -- the smaller the better -- for all the possible outputs $y\in\mathcal{Y}$. The prediction set is obtained by including all the outputs $y$ whose score is below some threshold $\gamma $, i.e., 
\begin{equation}
\Gamma(x^\text{te}|c^\text{te},c^\text{cal})=\bigg\{y \in \mathcal{Y}: S(x^\text{te},y) \leq \gamma \bigg\}. 
\label{gamma}
 \end{equation} Note that this is a low-complexity thresholding mechanism, requiring only the evaluation of the output of the pre-trained model. Depending on the application, the score $S(x,y)$ may be selected as the residual score $S(x,y)=|y-f(x)|$ if $f(x)$ is a regression model, or as the high-probability score $S(x,y)=f_y(x)$ if $f(x)$ is a classification model assigning probability $f_y(x)$ to class $y$. Other examples of scores are reviewed in \cite{angelopoulos2024therory_CP, NEURIPS2019Conformalized}. 
 
Given a target miscoverage rate $\alpha$, CP computes the threshold  $\gamma$ in (\ref{gamma}) based on the calibration data set $\mathcal{D}_{c^\text{cal}}$. Specifically, it obtains the prediction set as 
\begin{equation}
\Gamma_{\alpha}^{\text{CP}}(x^\text{te}|c^\text{te},c^\text{cal})=\bigg\{y \in \mathcal{Y}: S(x^\text{te},y) \leq \textrm{Quantile}\bigg({1-\alpha};\sum_{i=1}^{|\mathcal{D}_{c^\text{te}}|}\delta_{S(x[i],y[i])}+\delta_{\infty}\bigg)\bigg\},    
\label{cp}
    \end{equation}
where $\delta_{(\cdot)}$ is the Dirac delta function, and $\textrm{Quantile}\big({\beta};p\big)$ represents the $\beta$-quantile of the empirical distribution $p$. Accordingly,  the right-hand side of (\ref{cp}) can be equivalently computed as the $\lceil (1-\alpha)(|\mathcal{D}_{c^\text{te}}|+1) \rceil$-th smallest value in the set $\{S(x[1],y[1]),...,S(x[|\mathcal{D}_{c^\text{te}}|
],y[|\mathcal{D}_{c^\text{te}}|])\}$ \cite{angelopoulos2024therory_CP}.

Under the assumption that calibration and test contexts are identical, the set (\ref{cp}) satisfies the coverage requirement (\ref{set_predictor}), i.e., \cite{vovk2005cp,Anastasios2023CPintro}
\begin{equation}
\text{Pr}\big[y^\text{te} \in \Gamma_{\alpha}^\text{CP}(x^\text{te}|c^\text{te})\big]\geq 1-\alpha,    
\end{equation}
where the probability is computed over the distribution (\ref{joint_dist_eq})-(\ref{c_cal_eq}) with $c^\text{te}=c^\text{cal}$. 

In contrast, if the calibration data set $\mathcal{D}_{c^\text{cal}}$ corresponds to a different context $c^\text{cal} \neq c^\text{te}$, a direct application of CP as in (\ref{cp}) would yield the prediction set
\begin{equation}
\Gamma_{\alpha}^{\text{CP}}(x^\text{te}|c^\text{te},c^\text{cal})=\bigg\{y \in \mathcal{Y}: S(x^\text{te},y) \leq \textrm{Quantile}\bigg({1-\alpha};\sum_{i=1}^{|\mathcal{D}_{c^\text{cal}}|}\delta_{S(x[i],y[i])}+\delta_{\infty}\bigg)\bigg\}.    
\label{cp_shift}
    \end{equation}
The prediction set has the weaker coverage guarantee \cite{barber2023exc}
\begin{equation}
\text{Pr}\big[y^\text{te} \in \Gamma_{\alpha}^\text{CP}(x^\text{te}|c^\text{te},c^\text{cal})\big]\geq 1-\alpha-\lVert p(x|c^\text{cal})-p(x|c^\text{te})\rVert_\text{TV},
\label{cp_coverage}
    \end{equation}
where $\lVert \cdot \rVert_\text{TV}$ is the total variation (TV) between the input distributions $p(x|c^\text{te})$ and $p(x|c^\text{cal})$, i.e.,
\begin{equation}
\lVert p(x|c^\text{cal})-p(x|c^\text{te})\rVert_\text{TV}=\frac{1}{2}\int| p(x|c^\text{cal})-p(x|c^\text{te})|\mathrm{d}x.
\label{tv}
\end{equation}
By (\ref{cp_coverage}), if the contexts are different, $c^\text{te} \neq c^\text{cal}$, and thus $\lVert p(x|c^\text{cal})-p(x|c^\text{te})\rVert_\text{TV}>0$, the predicted set no longer satisfies the desired target coverage $1-\alpha$. Furthermore, the corresponding coverage gap can be bounded by the distance $\lVert p(x|c^\text{cal})-p(x|c^\text{te})\rVert_\text{TV}$ between the covariate distributions under different contexts $c^\text{cal}$ and $c^\text{te}$  \cite{Lei2021TV, barber2023exc}. 

\subsection{Conservative Conformal Prediction}
\label{conservative conformal prediction}
Based on the result (\ref{tv}), if an estimate of the TV distance is available, one could apply CP with the more conservative miscoverage level
\begin{equation} \tilde{\alpha}=\bigg[\alpha-\lVert p(x|c^\text{cal})-p(x|c^\text{te})\rVert_\text{TV}\bigg]^{+}, 
\end{equation}
where $[\cdot]^+$ denotes the positive part operator, i.e., $[\cdot]^+=\max(\cdot,0)$. By (\ref{tv}), this method can ensure the desired coverage level of $\alpha$, while generally increasing the inefficiency (\ref{inefficiency}).

\subsection{Weighted Conformal Prediction}
\label{weighted cp}
As discussed, the standard CP set construction (\ref{cp_shift}) suffers from the coverage gap in (\ref{cp_coverage}) in the presence of a covariate shift. WCP eliminates the coverage gap as long as one knows the covariate likelihood ratio 
\begin{equation}
w(x,c^\text{te},c^\text{cal})=\frac{p(x|c^\text{te})}{p(x|c^\text{cal})} 
\label{likelihood}
\end{equation}
for all calibration and test inputs $x$. Using the ratio (\ref{likelihood}), WCP accounts for the covariate shift between the test data and the calibration data set by weighting the scores according to their covariate likelihoods. Specifically, given the calibration data set $\mathcal{D}_{c^\text{cal}}$ for some context $c^\text{cal}$, WCP builds the prediction set as
    \begin{equation}
\Gamma_{\alpha}^\text{WCP}(x^\text{te}|c^\text{te},c^\text{cal})=\bigg\{y \in \mathcal{Y}: S(x^\text{te},y) \leq \textrm{Quantile}\bigg({1-\alpha};\sum_{i=1}^{|\mathcal{D}_{c^\text{cal}}|}\tilde{w}_{i}\delta_{S(x[i],y[i])}+\tilde{w}_{|\mathcal{D}_{c^\text{cal}}|+1}\delta_{\infty}\bigg)\bigg\},
\label{wcp}
    \end{equation}
    with self-normalizing weights 
    \begin{equation}
    \tilde{w}_{i}=\frac{w(x[i],c^\text{te},c^\text{cal})}{\sum_{i=1}^{|\mathcal{D}_{c^\text{cal}}|} w(x[i],c^\text{te},c^\text{cal})+w(x^\text{te},c^\text{te},c^\text{cal})}    
    \label{multi_weight}
    \end{equation}
     and 
     \begin{equation}
\tilde{w}_{|\mathcal{D}_{c^\text{cal}}|+1}=\frac{w(x^\text{te},c^\text{te},c^\text{cal})}{\sum_{i=1}^{|\mathcal{D}_{c^\text{cal}}|} w(x[i],c^\text{te},c^\text{cal})+w(x^\text{te},c^\text{te},c^\text{cal})}.
     \end{equation} The weights $\{\tilde{w}_1,\ldots \tilde{w}_{|\mathcal{D}^\text{cal}|+1}\}$ form a probability mass function that assigns a larger probability to data points $x[i]$ that are more likely under the test context $c^\text{te}$. 
    
WCP guarantees the target coverage (\ref{set_predictor}) i.e.,
\begin{equation}
        \text{Pr}\big[y^\text{te} \in \Gamma_{\alpha}^\text{WCP}(x^\text{te}|c^\text{te},c^\text{cal})\big]\geq 1-\alpha,
        \label{wcp_coverage}
    \end{equation}
where the expectation is taken with respect to the joint distribution (\ref{joint_dist_eq})-(\ref{c_cal_eq}) \cite{Tibshirani2019Dist_shift, Lei2021TV}. 

As discussed in \cite{bhattacharyya2024groupWCP}, the ratio (\ref{likelihood}) can be practically estimated if one has unlabeled data for the target and test contexts. In the considered setting, we assume that the controller has no data for the test context, making WCP inapplicable.


\section{Meta-Learned Context-Dependent Weighted Conformal Prediction} \label{Meta-Learned Context-Dependent Weighted Conformal Prediction}
In this section, we introduce the proposed ML-WCP method. As depicted in Fig. \ref{fig2}, ML-WCP operates in two phases: a meta-training phase and a testing phase. During the \textit{offline} meta-training phase, ML-WCP optimizes a covariate likelihood ratio estimator $\omega_{\theta}(x,c_1,c_2)$ of the likelihood ratio (\ref{likelihood}). The estimator $\omega_{\theta}(x,c_1,c_2)$ makes it possible to approximate the covariate likelihood (\ref{likelihood}) for any pair of contexts without requiring any data for the two contexts $c_1$ and $c_2$. Using this model, for a given selected calibration context $c^\text{cal}$, ML-WCP applies the WCP set prediction (\ref{wcp}) with the estimate $\omega_{\theta}(x,c^\text{te},c^\text{cal})$ in lieu of the ground-truth ratio $w(x,c^\text{te},c^\text{cal})$. As we will prove, the coverage performance depends on the quality of the estimator $\omega_{\theta}(x,c^\text{te},c^\text{cal})$. 
\subsection{Offline Meta-Learning}
\label{offline meta-learning}
As a first step, as seen in Fig. \ref{fig2}, ML-WCP partitions the set of contexts, $\mathcal{C}$, into a subset $\mathcal{C}^\text{tr}$ and a subset $\mathcal{C}^\text{cal}$ with $\mathcal{C}^\text{tr} \cup \mathcal{C}^\text{cal}=\mathcal{C}$ and $\mathcal{C}^\text{tr} \cap \mathcal{C}^\text{cal}=\emptyset$. As discussed in this section, the set $\mathcal{C}^\text{tr}$ of contexts is used to optimize the zero-shot estimator $\omega_{\theta}(x,c_1,c_2)$, while the set $\mathcal{C}^\text{cal}$ is leveraged for calibration. 

During the offline meta-learning phase, ML-WCP optimizes a zero-shot estimator of the weights required by the WCP set (\ref{wcp_coverage}). Specifically, as shown in Fig. \ref{fig2}, ML-WCP leverages training data logged from multiple contexts $c\in \mathcal{C}^\text{tr}$ to estimate a mapping $w_{\theta}(x,c_1,c_2)$ that approximates the covariate likelihood ratio weights (\ref{likelihood}) for any input $x$ and context pair $(c_1,c_2)$.

To proceed, a first observation is that the WCP prediction set (\ref{wcp_coverage}) depends on the covariate likelihood ratio (\ref{likelihood}) only through the normalized weights $\{\tilde{w}_i\}_{i=1}^{|\mathcal{D}_{c^\text{cal}}|}$ in (\ref{wcp}), which are invariant to any positive scaling of the ratio $w(w,c_1,c_2)$. The following lemma identifies a quantity proportional to the covariate likelihood ratio $w(x,c_1,c_2)$ that can be conveniently estimated.
\begin{lemma}
    For any two contexts $c_1 \in \mathcal{C}$ and $c_2 \in \mathcal{C}$ with $c_1 \neq c_2$, the weight function $w(x,c_1,c_2)$ can be expressed as
    \begin{equation}
        w(x,c_1,c_2)=\frac{q(c_1|x)}{q(c_2|x)},
        \label{w_express}
    \end{equation}
    where 
    \begin{equation}
    q(c_i|x)=\frac{p(x|c_i)}{p(x|c_1)+p(x|c_2)}, \quad \textnormal{for } i=1,2.
    \label{q_express}
    \end{equation}
\end{lemma}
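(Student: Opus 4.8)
The plan is to prove the identity by direct substitution, since the statement is essentially a Bayes-rule rewriting of the covariate likelihood ratio rather than a result requiring any analytic machinery. First I would recall that, specializing the definition (\ref{likelihood}) to the generic context pair $(c_1,c_2)$, the weight function is the density ratio
\begin{equation}
w(x,c_1,c_2)=\frac{p(x|c_1)}{p(x|c_2)}.
\label{plan_w}
\end{equation}
I would then substitute the proposed expressions (\ref{q_express}) for $q(c_1|x)$ and $q(c_2|x)$ into the ratio $q(c_1|x)/q(c_2|x)$, observe that both quantities carry the same normalizer $p(x|c_1)+p(x|c_2)$, and cancel it. What remains is exactly $p(x|c_1)/p(x|c_2)$, which coincides with (\ref{plan_w}), establishing (\ref{w_express}).

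For the manipulation to be well posed I would restrict attention to inputs $x$ lying in the support of at least one of the conditionals $p(\cdot|c_1)$ and $p(\cdot|c_2)$, so that the common denominator $p(x|c_1)+p(x|c_2)$ is strictly positive, together with the standard condition $p(x|c_2)>0$ under which the ratio (\ref{likelihood}) is finite. These are precisely the regularity assumptions already implicit in the use of the likelihood ratio by WCP, so no new hypotheses are introduced; the stated assumption $c_1\neq c_2$ merely guarantees a genuine, non-degenerate shift (for $c_1=c_2$ one trivially has $w\equiv 1$).

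The substantive content of the lemma is interpretive rather than computational: $q(c_i|x)$ is exactly the posterior probability that $x$ was generated under context $c_i$ when the two candidate contexts $\{c_1,c_2\}$ are assigned equal prior mass $1/2$. Expressing $w$ as a ratio of these posteriors recasts density-ratio estimation as a two-class discrimination problem, which is what renders the quantity learnable by the neural estimator $\omega_{\theta}(x,c_1,c_2)$ introduced in Section \ref{Meta-Learned Context-Dependent Weighted Conformal Prediction}. Accordingly, I do not anticipate any analytic obstacle in the proof itself; the only care required is to state the support/positivity conditions above, and the real work of the paper lies not in this identity but in using it to train and deploy the zero-shot estimator $\omega_{\theta}$.
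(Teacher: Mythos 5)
Your proof is correct and follows essentially the same route as the paper, which likewise establishes the lemma by direct substitution of (\ref{q_express}) into (\ref{w_express}) and cancellation of the common normalizer $p(x|c_1)+p(x|c_2)$. Your added positivity/support remarks and the equal-prior posterior interpretation match what the paper states informally in the surrounding text, so there is no substantive difference.
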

This result follows directly by substituting (\ref{q_express}) into (\ref{w_express}). Based on Lemma 1, ML-WCP estimates the ratio (\ref{w_express}). This can be done by observing that the term $q(c_i|x)$ in (\ref{q_express}) is the posterior distribution obtained from the joint distribution $q(c_i,x)=p(c_i)p(x|c_i)$ with $p(c_i)=1/2$ for $i=1,2$. Thus, one can estimate $q(c_i|x)$ by training a classifier to distinguish contexts $c_1$ and $c_2$ based on the observation $x$.

Accordingly, ML-WCP adopts a parametric estimator $\omega_{\theta}(x,c_1,c_2)$, whose parameters $\theta$ are optimized during the meta-training phase to distinguish between pairs of contexts $c_1$ and $c_2$. For a randomly selected pair of context vectors $c_1\neq c_2$ and $c_1, c_2 \in \mathcal{C}^\text{tr}$, ML-WCP constructs a labeled data set $\mathcal{D}_{c_1,c_2}$ by drawing $D$ examples from data set $\mathcal{D}_{c_1}$ and $D$ examples from data set $\mathcal{D}_{c_2}$. Each data point $(x[i],y[i])$ from data set $\mathcal{D}_{c_1}$ is labeled as $z[i]=1$, while all examples $(x[i],y[i])$ in data set $\mathcal{D}_{c_2}$ are labeled as $z[i]=0$. This way, detecting label $z$ also identifies the context vector from the given pair $(c_1,c_2)$. Considering all the $2D$ data points results in the data set $\mathcal{D}_{c_1,c_2}=\{(x[i],y[i],z[i])\}_{i=1}^{2D}$.

To train a binary classifier $\omega_{\theta}(x,c_1,c_2)$ to distinguish between contexts $c_1$ and $c_2$, we adopt the standard cross-entropy loss
\begin{equation}
\begin{aligned}
        \mathcal{L}_{c_1,c_2}(\theta)&=-\sum_{i=1}^{2D}\bigg[z[i]\log\big(q(c_1|x)\big)+(1-z[i])\log \big(q(c_2|x)\big)\bigg]\\
        &=-\sum_{i=1}^{2D}\bigg[z[i]\log\bigg(\frac{w_{\theta}(x,c_1,c_2)}{1+w_{\theta}(x,c_1,c_2)}\bigg)+(1-z[i])\log \bigg(\frac{1}{1+w_{\theta}(x,c_1,c_2)}\bigg)\bigg]\\&=\sum_{i=1}^{2D}\bigg[-z[i]\log\big(\omega_{\theta}(x,c_1,c_2)\big)+\log\big(1+\omega_{\theta}(x,c_1,c_2)\big)\bigg].
        \label{loss}            
\end{aligned}
    \end{equation}
For meta-training, we randomly sample $M$ pairs $(c_1,c_2)\in \mathcal{C}^\text{tr}\times \mathcal{C}^\text{tr}$, and combine the resulting losses (\ref{loss}) by summing the contributions of all sampled pairs $(c_1,c_2)$. The optimization of the sum-loss is carried out via stochastic gradient descent \cite{simeone2022machine}. The meta-learning procedure adopted by ML-WCP is outlined in Algorithm 1. 
\begin{algorithm}
\textbf{Initialize} parameter vector $\theta$, meta-training data set $\{\mathcal{D}_{c}\}_{c \in \mathcal{C}^\text{tr}}$, step size hyperparameter $\kappa$, maximum number of context pairs $M$
\vspace{-13pt}\\
\hrulefill \\ \vspace{-3pt}
\textbf{\textit{Meta-Learning Phase}}\\ \vspace{-13pt}
\hrulefill \vspace{-3pt}\\ 
\While{\normalfont{convergence criterion not met}}{
    \For{$m=1,\ldots,M$}{Randomly select a context $c_1$ and context $c_2 \neq c_1$, and draw $D$ examples from data sets $\mathcal{D}_{c_1}$ and $\mathcal{D}_{c_2}$\\
    Each data point $(x[i],y[i])$ from data set $\mathcal{D}_{c_1}$ is labeled as $z[i]=1$, while those in $\mathcal{D}_{c_2}$ are labeled as $z[i]=0$\\
    With labeled data set $\mathcal{D}_{c_1,c_2}=\{(x[i],y[i],z[i])\}_{i=1}^{2D}$, compute the cross-entropy loss (\ref{loss})
    }
    Update the parameter vector as $\theta\leftarrow \theta-\kappa\sum_{c_1,c_2}\nabla\mathcal{L}_{c_1, c_2}(\theta)$, where the sum is extended to a mini-batch of contexts.
        
}
Return the optimized parameter vector $\theta$ \vspace{-10pt}\\ \hrulefill \\
\vspace{-5pt}
\textbf{\textit{Runtime Phase}}\\ \vspace{-13pt}
\hrulefill \vspace{-3pt} \\
Given a test context $c^\text{te}$, select a calibration context $c^\text{cal}\in \mathcal{C}^\text{cal}$ by (\ref{single_cal_select}), and return prediction set $\Gamma_{\alpha}^\textnormal{ML-WCP}(x^\textnormal{te}|c^\textnormal{te},c^\textnormal{cal})$.
\label{alg}
\caption{ML-WCP}
\end{algorithm}

\subsection{Model Architecture} 
\label{model architecture}
As explained above, ML-WCP trains a parametric classifier $\omega_{\theta}(x,c_1,c_2)$ that can automatically adapts to different context pairs $(c_1,c_2)$. By definition, the covariate likelihood ratio estimator $\omega_{\theta}(x,c_1,c_2)$ must satisfy the structural property
\begin{equation}
    \omega_{\theta}(x,c_2,c_1)=\frac{1}{\omega_{\theta}(x,c_1,c_2)}.
    \label{architecture_eq}
\end{equation}
In fact, the covariate likelihood ratio satisfies the corresponding property
\begin{equation}
    \omega(x,c_2,c_1)=\frac{q(x|c_2)}{q(x|c_1)}=\frac{1}{\omega(x,c_1,c_2)}.
\end{equation}
To ensure this property, we propose to adopt the architecture in Fig. \ref{fig2}, which expresses the estimator as
\begin{equation}
\omega_{\theta}(x,c_1,c_2)=\exp \big(g_{\theta}(x,c_1)-g_{\theta}(x,c_2)\big),   
\label{architecture}
\end{equation}
where $g_{\theta}(x,c)$ is a parametric function such as a multi-layer perceptron (MLP). One can readily check that the parameterization (\ref{architecture}) satisfies the property (\ref{architecture_eq}) for any function $g_{\theta}(x,c)$.

\subsection{Calibration Guarantees}
The proposed ML-WCP algorithm summarized in Algorithm 1 satisfies the following coverage guarantee.
\begin{lemma}
ML-WCP satisfies the coverage guarantee
\begin{equation}
\textnormal{Pr}\big[y^\textnormal{te} \in \Gamma_{\alpha}^\textnormal{ML-WCP}(x^\textnormal{te}|c^\textnormal{te},c^\textnormal{cal})\big]\geq 1-\alpha-\frac{1}{2}\mathbb{E}_{x \sim p(x|c^\textnormal{cal})}\bigg|\frac{\omega_{\theta}(x,c^\textnormal{te},c^\textnormal{cal})}{\mathbb{E}_{x \sim p(x|c^\textnormal{cal})}[\omega_{\theta}(x,c^\textnormal{te},c^\textnormal{cal})]}-w(x,c^\textnormal{te},c^\textnormal{cal})\bigg|,
\label{mlwcp_coverage}
    \end{equation}
where the probability on the left-hand side is computed over joint distribution (\ref{joint_dist_eq}), while the average on the right-hand side is taken with respect to the distribution $p(x|c^\textnormal{cal})$ of input $x$ given the calibration context $c^\textnormal{cal}$.
\end{lemma}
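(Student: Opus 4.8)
The plan is to recognize ML-WCP as an instance of WCP in which the true covariate likelihood ratio $w(x,c^\text{te},c^\text{cal})$ from (\ref{likelihood}) is replaced by the estimator $\omega_{\theta}(x,c^\text{te},c^\text{cal})$, and then to quantify the resulting coverage loss through a total-variation robustness argument. First I would define the normalized estimate
\[
\tilde{\omega}(x)=\frac{\omega_{\theta}(x,c^\text{te},c^\text{cal})}{\mathbb{E}_{x\sim p(x|c^\text{cal})}[\omega_{\theta}(x,c^\text{te},c^\text{cal})]},
\]
and observe that $\tilde{p}(x):=\tilde{\omega}(x)\,p(x|c^\text{cal})$ is a valid probability density, since $\int\tilde{\omega}(x)p(x|c^\text{cal})\,\mathrm{d}x=1$ by construction. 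The crucial structural fact is that the self-normalizing WCP weights in (\ref{multi_weight}) are invariant to any positive rescaling of the weight function, so running WCP with $\omega_{\theta}$ is identical to running it with $\tilde{\omega}$. Moreover, $\tilde{\omega}(x)=\tilde{p}(x)/p(x|c^\text{cal})$ is exactly the true covariate likelihood ratio between the surrogate distribution $\tilde{p}$ and the calibration distribution $p(\cdot|c^\text{cal})$.

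Given this reduction, I would invoke the exact WCP guarantee (\ref{wcp_coverage}): were the test input $x^\text{te}$ actually drawn from $\tilde{p}$, the set $\Gamma_{\alpha}^\text{ML-WCP}$ would attain coverage at least $1-\alpha$, because in that hypothetical the plugged-in weight $\tilde{\omega}$ coincides with the genuine likelihood ratio. The only discrepancy is that the real test input is distributed according to $p(\cdot|c^\text{te})$ rather than $\tilde{p}$. I would therefore transfer the coverage guarantee across this mismatch using the standard total-variation robustness bound (as in the derivation of (\ref{cp_coverage}) and \cite{barber2023exc}): for the fixed event $\{y^\text{te}\in\Gamma\}$, its probability under the true joint law differs from its probability under the surrogate law by at most the TV distance between the two laws, yielding the coverage penalty.

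It then remains to evaluate that TV distance. Since the calibration-set factor $p(\mathcal{D}_{c^\text{cal}}|c^\text{cal})$ is common to both laws and the conditional $p(y|x)$ is context-invariant by the covariate-shift assumption (\ref{pb_condition}), the joint TV distance collapses to the TV distance between the $x$-marginals $\tilde{p}$ and $p(\cdot|c^\text{te})$. Writing $p(x|c^\text{te})=w(x,c^\text{te},c^\text{cal})\,p(x|c^\text{cal})$ from (\ref{likelihood}) and $\tilde{p}(x)=\tilde{\omega}(x)\,p(x|c^\text{cal})$, the integrand factors as $|\tilde{\omega}(x)-w(x,c^\text{te},c^\text{cal})|\,p(x|c^\text{cal})$, so by the definition (\ref{tv}) the TV distance equals $\tfrac{1}{2}\mathbb{E}_{x\sim p(x|c^\text{cal})}|\tilde{\omega}(x)-w(x,c^\text{te},c^\text{cal})|$, which is precisely the penalty appearing in (\ref{mlwcp_coverage}).

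I expect the main obstacle to lie in the rigorous setup of the robustness step rather than in the final computation: one must argue carefully that $\Gamma_{\alpha}^\text{ML-WCP}$ is literally the same measurable set-valued map regardless of which test distribution is assumed, so that only the measure over which coverage is evaluated changes, and then verify that the shared calibration factor together with the invariant $p(y|x)$ reduces the joint TV distance to the covariate marginal TV distance. Establishing the scale-invariance of the self-normalized weights and identifying $\tilde{\omega}$ as the exact likelihood ratio under $\tilde{p}$ are the conceptual linchpins that make the exact WCP guarantee (\ref{wcp_coverage}) applicable in the first place.
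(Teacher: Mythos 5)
Your proposal is correct and follows essentially the same route as the paper's proof in Appendix A: both define the surrogate (estimated target) distribution $p_{\theta}(x|c^\text{te})=\tilde{\omega}(x)\,p(x|c^\text{cal})$, bound the coverage loss by the total-variation distance between this surrogate and the true test distribution, and evaluate that distance as $\tfrac{1}{2}\mathbb{E}_{x\sim p(x|c^\text{cal})}\big|\tilde{\omega}(x)-w(x,c^\text{te},c^\text{cal})\big|$. The only difference is one of packaging: the paper outsources the exactness-under-the-surrogate step and the TV robustness transfer to \cite[Theorem 1]{Lei2021TV}, whereas you derive them explicitly from the scale-invariance of the self-normalized weights and the exact WCP guarantee.
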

\begin{proof}
The proof follows from  \cite[Theorem 1]{Lei2021TV} and detailed in Appendix A.
\end{proof}

\section{ML-WCP with Multi-Context Calibration}
\label{ML-WCP with Multi-Context Calibration}
So far, we have assumed the use of data from a single context $c^\text{cal}$ for calibration. In practice, data from any given context may be limited, motivating the use of data from multiple contexts for calibration. This section explores this extension. Specifically, after proposing two general approaches for selecting the subset of calibration contexts, we introduce two multi-context ML-WCP schemes.

\subsection{Selecting Multi-Context Calibration Data}

Given the test context $c^\text{te}$, we start by selecting a subset of calibration contexts from the overall calibration set $\mathcal{C}^\text{cal}$. By leveraging a distance-based metric as in Section \ref{calibration data}, a calibration subset $\tilde{\mathcal{C}}^\text{cal} \subseteq \mathcal{C}^\text{cal}$ can be obtained in one of the following ways.

\subsubsection{Fixed number of calibration contexts}
This approach chooses a fixed number $K^\text{cal}$ of calibration contexts by finding the subset $\tilde{\mathcal{C}}^\text{cal} \subseteq \mathcal{C}^\text{cal}$ of vectors $c^\text{cal}$ that closet to $c^\text{te}$:
\begin{equation}
    \tilde{\mathcal{C}}^\text{cal}=\argminB_{\tilde{\mathcal{C}}^\text{cal}} \sum_{c \in \tilde{\mathcal{C}}^\text{cal}} d(c^\text{te},c), \,\,\, \text{s.t. } |\tilde{\mathcal{C}}^\text{cal}|=K^\text{cal}.
    \label{multi_cal_select_fixed}
\end{equation}
\subsubsection{Adaptive number of calibration contexts}
Different test contexts $c^\text{te}$ may require different amount of calibration data. Furthermore, different contexts may have varying numbers of relevant calibration contexts. Therefore, it may be preferable to choose the subset $\tilde{\mathcal{C}}^\text{cal}\subseteq \mathcal{C}^\text{cal}$ so as to include all the contexts $c \in \tilde{\mathcal{C}}^\text{cal}$ that are sufficiently close to the test context $c^\text{te}$. This yields the subset 
\begin{equation}
    \tilde{\mathcal{C}}^\text{cal}=\{c \in \mathcal{C}^\text{cal}:d(c^\text{te},c)\leq \epsilon \},
    \label{multi_cal_select_adapt}
\end{equation}
for some threshold $\epsilon>0$. With this approach, the number of selected calibration contexts $|\tilde{\mathcal
{C}}^\text{cal}|=K^\text{cal}$ is adapted to the given test context $c^\text{te}$. Note that, in practice, one can ensure that the set $\tilde{\mathcal{C}}^\text{cal}$ has at least one context through a suitable choice of the threshold $\epsilon$.
\subsection{ML-WCP with Majority Vote}
A simple way to leverage multiple calibration contexts is to construct the ML-WCP prediction set separately for each calibration context $c \in \tilde{\mathcal{C}}^\text{cal}$, as discussed in the previous section, and then combine the resulting sets $\{\Gamma_{\alpha}^\text{ML-WCP}(x^\text{te}|c^\text{te},c)\}_{c \in \tilde{\mathcal{C}}^\text{cal}}$ in (\ref{mlwcp_coverage}). This can be done via a majority vote-based approach by including all output values $y\in\mathcal{Y}$ included by a least half of the sets $\{\Gamma_{\alpha}^\text{ML-WCP}(x^\text{te}|c^\text{te},c)\}_{c \in \tilde{\mathcal{C}}^\text{cal}}$ \cite{ramdas2024majorityvote}. This yields the set
\begin{equation}
\Gamma_{\alpha}^\text{ML-WCP-MV}(x^\text{te}|c^\text{te},\tilde{\mathcal{C}}^\text{cal}) = \bigg\{y \in \mathcal{Y}: \frac{1}{K^\text{cal}}\sum_{c \in \tilde{\mathcal{C}}^\text{cal}}\mathbbm{1}\big(y \in \Gamma_{\alpha}^\text{ML-WCP}(x^\text{te}|c^\text{te},c)\big) >\frac{1+u}{2}\bigg\},    
\end{equation}
where $u$ is a realization of a uniform random variable on the interval $[0,1]$. 

ML-WCP-MV satisfies the following coverage guarantee.

\begin{lemma}
    ML-WCP-MV satisfies the following coverage property
\begin{equation}
\label{mlwcp_mv_coverage}
\textnormal{Pr}\big[y^\textnormal{te} \in \Gamma_{\alpha}^\textnormal{ML-WCP-MV}(x^\textnormal{te}|c^\textnormal{te},\tilde{\mathcal{C}}^\textnormal{cal})\big] \geq 1-2 \frac{\sum_{k=1}^{K^\textnormal{cal}} \alpha_k}{K^\textnormal{cal}},
\end{equation}
where
\begin{equation}
     \alpha_k=\alpha+\frac{1}{2}\mathbb{E}_{x \sim p(x|c_k)}\bigg|\frac{\omega_{\theta}(x,c^\textnormal{te},c_k)}{\mathbb{E}_{x \sim p(x|c_k)}[\omega_{\theta}(x,c^\textnormal{te},c_k)]}-w(x,c^\textnormal{te},c_k)\bigg|
\end{equation}
is the miscoverage level \textnormal{(\ref{mlwcp_coverage})} for the $k$-th calibration context $c_k$ from obtained $\tilde{\mathcal{C}}^\textnormal{cal}$.
\end{lemma}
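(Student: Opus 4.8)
The plan is to analyze the complementary miscoverage event and reduce it, via the randomization variable $u$, to a Markov-type inequality on the \emph{average} per-context miscoverage. First I would introduce, for each selected calibration context $c_k \in \tilde{\mathcal{C}}^\text{cal}$, the individual miscoverage indicator $E_k = \mathbbm{1}\big(y^\text{te} \notin \Gamma_{\alpha}^\text{ML-WCP}(x^\text{te}|c^\text{te},c_k)\big)$, so that $W = \frac{1}{K^\text{cal}}\sum_{k=1}^{K^\text{cal}} E_k$ is the fraction of the individual ML-WCP sets that fail to cover $y^\text{te}$. By the definition of the majority-vote set, $y^\text{te}$ is included iff the covering fraction $1-W$ exceeds the randomized threshold $\tfrac{1+u}{2}$; equivalently, the miscoverage event $\{y^\text{te} \notin \Gamma_{\alpha}^\text{ML-WCP-MV}\}$ coincides with the event $\{W \geq \tfrac{1-u}{2}\}$.

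The key step is to handle the randomization. Writing $s = \tfrac{1-u}{2}$, which is uniform on $[0,\tfrac12]$ and independent of the data (hence of $W$), I would condition on $W$ and integrate over $s$, using $\text{Pr}[s \leq w] = \min(2w,1) \leq 2w$ for $w \in [0,1]$, to obtain
\begin{equation*}
\text{Pr}\big[y^\text{te} \notin \Gamma_{\alpha}^\text{ML-WCP-MV}(x^\text{te}|c^\text{te},\tilde{\mathcal{C}}^\text{cal})\big] = \mathbb{E}\big[\min(2W,1)\big] \leq 2\,\mathbb{E}[W].
\end{equation*}
This is precisely the randomized majority-vote argument of \cite{ramdas2024majorityvote}: the uniform threshold implements a soft Markov inequality, so that no union bound -- and hence no independence across contexts -- is required.

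To conclude, I would apply linearity of expectation, $\mathbb{E}[W] = \frac{1}{K^\text{cal}}\sum_{k} \mathbb{E}[E_k] = \frac{1}{K^\text{cal}}\sum_{k}\text{Pr}\big[y^\text{te}\notin \Gamma_{\alpha}^\text{ML-WCP}(x^\text{te}|c^\text{te},c_k)\big]$, and then bound each per-context miscoverage probability by $\alpha_k$ using Lemma 2. Substituting gives $\text{Pr}[y^\text{te}\notin\Gamma_{\alpha}^\text{ML-WCP-MV}] \leq 2\sum_{k} \alpha_k / K^\text{cal}$, which is the claimed bound after taking complements.

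The main obstacle I anticipate is the randomization step rather than the concluding averaging: one must verify that $s=\tfrac{1-u}{2}$ is independent of $W$ and correctly evaluate $\text{Pr}[s \leq w] = \min(2w,1)$ over the full range $w\in[0,1]$, since the naive bound $2w$ can exceed $1$ for $w > \tfrac12$. The remaining subtlety is merely bookkeeping: the events $E_k$ are correlated through the shared test pair $(x^\text{te},y^\text{te})$, but because the argument relies only on linearity of expectation, this correlation is harmless.
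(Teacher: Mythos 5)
Your proposal is correct and follows essentially the same route as the paper: the paper's proof simply invokes the randomized majority-vote result (Proposition 11.15 of the cited reference) together with the per-context guarantee of Lemma 2, and your argument---reducing miscoverage to the event $\{W \geq \tfrac{1-u}{2}\}$, integrating over the independent uniform threshold to get $\mathbb{E}[\min(2W,1)] \leq 2\,\mathbb{E}[W]$, and then bounding each $\Pr[y^\textnormal{te}\notin \Gamma_{\alpha}^\textnormal{ML-WCP}(x^\textnormal{te}|c^\textnormal{te},c_k)]$ by $\alpha_k$ via Lemma 2---is precisely the proof of that cited proposition, correctly handling the fact that no independence across the correlated events $E_k$ is needed. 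The only difference is that you make the argument self-contained where the paper defers to the citation.
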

\begin{proof}
    The proof follows from \cite[Proposition 11.15]{ramdas2024majorityvote}.
\end{proof}
Therefore, combining multiple subsets can enhance the inefficiency (\ref{inefficiency}), although this can come at the expense of a reduction in the coverage. For instance, if all the combined sets have the same coverage level $\alpha_k=\alpha$, the bound (\ref{mlwcp_mv_coverage}) yields a doubling of the miscoverage rate.


\subsection{Multi-Context ML-WCP via Mixing}
\begin{figure}[h]
    \centering
    \includegraphics[width=0.8\linewidth]{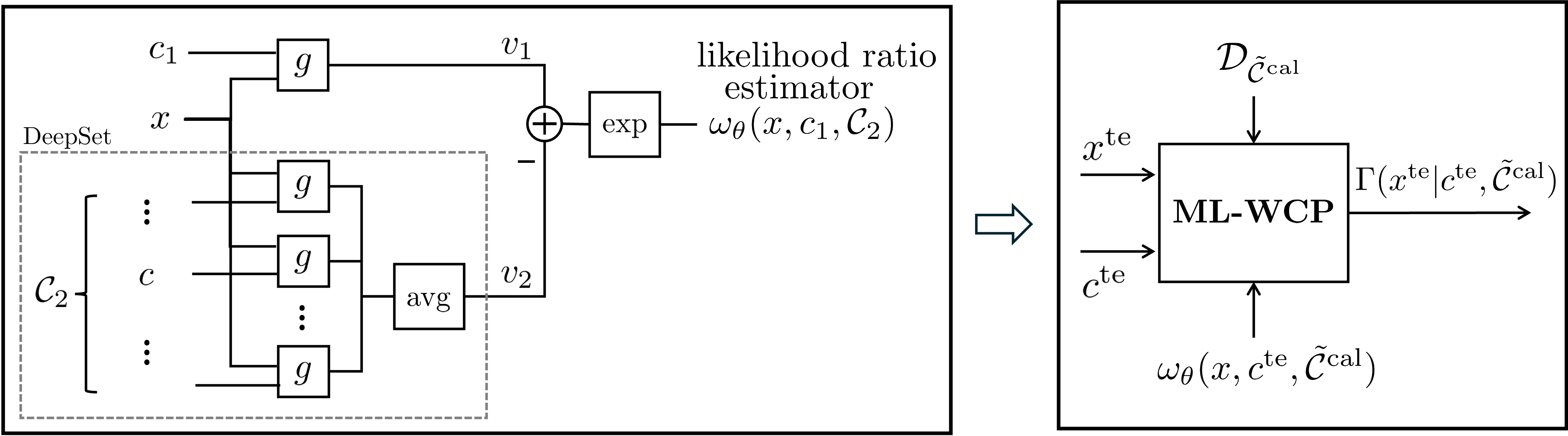}
    \caption{Multi-Context ML-WCP via Mixing: likelihood ratio estimator (left) and set predictor (right).}
    \label{mlwcp_mix}
\end{figure}

ML-WCP-MV uses data from different calibration contexts separately, combining the resulting prediction. A potentially more efficient approach is presented in this section that treats the calibration data set $\mathcal{D}_{\tilde{\mathcal{C}}^\text{cal}}=\cup_{c \in \tilde{\mathcal{C}}^\text{cal}}\mathcal{D}_c$ from the selected contexts $\tilde{\mathcal{C}}^\text{cal}$ jointly. The main idea behind the approach is to model the data set $\mathcal{D}_{\tilde{\mathcal{C}}^\text{cal}}$ as being generated from a mixture distribution.

Specficially, the distribution of the calibration data is assumed to be given by
\begin{equation}
\label{mixture_dist}
    \frac{1}{|\tilde{\mathcal{C}}^\text{cal}|}\sum_{c \in \tilde{\mathcal{C}}^\text{cal}}p(x|c).
\end{equation}
Accordingly, one first selects a context from the set $\tilde{\mathcal{C}}^\text{cal}$ with equal probability, and then samples a data point corresponding to the chosen context. Using the mixture distribution (\ref{mixture_dist}), the resulting covariate likelihood ratio $w(x,c^\text{te},\tilde{\mathcal{C}}^\text{cal})$ between the test context $c^\text{te}$ and the set of calibration contexts $\tilde{\mathcal{C}}^\text{cal}$ is
\begin{equation}
    w(x,c^\text{te},\tilde{\mathcal{C}}^\text{cal}) = \frac{p(x|c^\text{te})}{\frac{1}{|\tilde{\mathcal{C}}^\text{cal}|}\sum_{c\in\tilde{\mathcal{C}}^\text{cal}}p(x|c)},
    \label{multi-context likelihood}
\end{equation}
where the denominator reflects the mixture distribution (\ref{mixture_dist}) over the calibration contexts $c \in \tilde{\mathcal{C}}^\text{cal}$. 
 
We propose to estimate the covariate likelihood ratio $w_{\theta}(x,c_1,\mathcal{C}_2)$ for any input $x$ and pair $(c_1,\mathcal{C}_2)$ of context $c_1$ and subset of contexts $\mathcal{C}_2$ using meta-learning. In a manner similar to Fig. \ref{fig1}, the following lemma identifies a quantity proportional to the covariate likelihood ratio \( w(x, c_1, \mathcal{C}_2) \) that can be conveniently estimated.

\begin{lemma}
For any single context \( c_1 \in \mathcal{C} \) and subset of contexts \( \mathcal{C}_2 \subseteq \mathcal{C} \), where \( c_1 \notin \mathcal{C}_2 \), the weight function \( w(x, c_1, \mathcal{C}_2) \) in (\ref{multi-context likelihood}) can be expressed as
\begin{equation}
w(x, c_1, \mathcal{C}_2)=\frac{{q(c_1|x)}}{\frac{1}{|\mathcal{C}_2|} \sum_{c \in \mathcal{C}_2} {q(c|x)}},
\label{w_express_multicontext}
\end{equation}
where 
\begin{equation}
    q(c|x)=\frac{p(x|c)}{p(x|c_1)+\sum_{c \in \mathcal{C}_2}p(x|c)}.
    \label{q_express_multicontext}
\end{equation}
\end{lemma}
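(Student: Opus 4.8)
The plan is to mirror the one-line argument used for Lemma 1, since (\ref{w_express_multicontext}) is the natural many-context generalization of (\ref{w_express}). First I would recognize that the quantity $q(c|x)$ defined in (\ref{q_express_multicontext}) is a Bayesian posterior: introducing the joint distribution $q(c,x)=p(c)p(x|c)$ with a uniform prior $p(c)=1/(|\mathcal{C}_2|+1)$ over the $|\mathcal{C}_2|+1$ relevant contexts — namely $c_1$ together with every $c\in\mathcal{C}_2$ — Bayes' rule gives
\begin{equation}
q(c|x)=\frac{p(c)p(x|c)}{\sum_{c'}p(c')p(x|c')}=\frac{p(x|c)}{p(x|c_1)+\sum_{c'\in\mathcal{C}_2}p(x|c')},
\end{equation}
which is exactly (\ref{q_express_multicontext}). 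This interpretation is what later justifies estimating $q(c|x)$ through a classifier that distinguishes $c_1$ from the set $\mathcal{C}_2$, exactly as in the two-context case of Lemma 1.

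Next, the core step is a direct substitution. Writing $Z(x)=p(x|c_1)+\sum_{c\in\mathcal{C}_2}p(x|c)$ for the common normalizing denominator shared by $q(c_1|x)$ and by every $q(c|x)$ with $c\in\mathcal{C}_2$, I would plug (\ref{q_express_multicontext}) into the right-hand side of (\ref{w_express_multicontext}) and observe that $Z(x)$ cancels between numerator and denominator:
\begin{equation}
\frac{q(c_1|x)}{\frac{1}{|\mathcal{C}_2|}\sum_{c\in\mathcal{C}_2}q(c|x)}=\frac{p(x|c_1)/Z(x)}{\frac{1}{|\mathcal{C}_2|}\sum_{c\in\mathcal{C}_2}p(x|c)/Z(x)}=\frac{p(x|c_1)}{\frac{1}{|\mathcal{C}_2|}\sum_{c\in\mathcal{C}_2}p(x|c)}.
\end{equation}
The final expression is precisely the mixture-based covariate likelihood ratio (\ref{multi-context likelihood}) with $c^\text{te}=c_1$ and $\tilde{\mathcal{C}}^\text{cal}=\mathcal{C}_2$, which completes the argument.

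There is essentially no obstacle here, as the statement is an algebraic identity rather than a probabilistic claim; the only point requiring care is the bookkeeping in the cancellation. Concretely, I would verify that the \emph{same} normalizer $Z(x)$ appears in the numerator term $q(c_1|x)$ and in every summand $q(c|x)$ of the denominator — this is guaranteed by definition (\ref{q_express_multicontext}), where the denominator $p(x|c_1)+\sum_{c\in\mathcal{C}_2}p(x|c)$ is independent of which context indexes the numerator. Once this is confirmed, the cancellation is immediate, and the hypothesis $c_1\notin\mathcal{C}_2$ serves only to guarantee that the $|\mathcal{C}_2|+1$ contexts are distinct, so that the uniform prior $p(c)$ is well defined and each $q(c|x)$ is a genuine posterior probability.
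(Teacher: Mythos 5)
Your proof is correct and follows essentially the same route as the paper: the paper likewise establishes (\ref{w_express_multicontext}) by direct substitution of (\ref{q_express_multicontext}), with the common normalizer cancelling, and it makes the same observation that $q(c|x)$ is the posterior under the uniform prior $p(c)=1/(1+|\mathcal{C}_2|)$ over $\{c_1\}\cup\mathcal{C}_2$. Your write-up simply spells out the cancellation more explicitly than the paper does.
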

This result follows directly by substituting (\ref{q_express_multicontext}) into (\ref{w_express_multicontext}). Based on Lemma 2, we propose to estimate the ratio (\ref{w_express_multicontext}) via meta-learning. This can be done by observing that the term $q(c|x)$ in (\ref{q_express_multicontext}) is the posterior distribution obtained from the joint distribution $q(c,x)=p(c)p(x|c)$, with $p(c)=1/(1+|\mathcal{C}_2|)$ for $c \in \{c_1\}\cup \mathcal{C}_2$. Thus, we can estimate $q(c|x)$ by training a classifier to distinguish data generated under context $c_1$ against data generated under the mixture distribution (\ref{mixture_dist}).

The proposed method adopts a parametric estimator $\omega_{\theta}(x,c_1,\mathcal{C}_2)$, whose parameter $\theta$ is optimized during the meta-training phase to distinguish between distribution $p(x|c_1)$ and (\ref{mixture_dist}). For a randomly selected context subset $\mathcal{C}_2$ and $c_1 \notin \mathcal{C}_2$ and $\{c_1\}, \mathcal{C}_2 \subseteq \mathcal{C}^\text{tr}$, we construct a labeled data set $\mathcal{D}_{c_1,\mathcal{C}_2}$ by drawing $D$ examples from the data set $\mathcal{D}_{c_1}$ and $D/|{\mathcal{C}_2}|$ examples from each data set $\mathcal{D}_{c}$, where $c \in \mathcal{C}_2$, thereby ensuring a balanced data set $\mathcal{D}_{c_1}\cup \mathcal{D}_{\mathcal{C}_2}$ with $2D$ data points. Then, as in Section \ref{offline meta-learning}, each data point $(x[i],y[i])$ in data set $\mathcal{D}_{c_1}$ is labeled $z[i]=1$, while other data points in $\mathcal{D}_{\mathcal{C}_2}$ are labeled as $z[i]=0$. Detecting label $z$ identifies whether the result corresponds to $c_1$ or $\mathcal{C}_2$, and results in constructing the labeled data set $\mathcal{D}_{c_1,\mathcal{C}_2}=\{(x[i],y[i],z[i])\}_{i=1}^{2D}$.

To train a binary classifier, the cross-entropy loss is evaluated as
\begin{equation}
        \mathcal{L}_{c_1,\mathcal{C}_2}(\theta)=\sum_{i=1}^{2D}\bigg[-z[i]\log\big(\omega_{\theta}(x,c_1,\mathcal{C}_2)\big)+\log\big(1+\omega_{\theta}(x,c_1,\mathcal{C}_2)\big)\bigg].
        \label{loss multi context}
    \end{equation}
For meta-training, similar to Section \ref{offline meta-learning}, we randomly sample $M$ pairs $(c_1,\mathcal{C}_2) \in \mathcal{C}^\text{tr} \times \mathcal{C}^\text{tr}$. The resulting sum-loss, in averaging (\ref{loss multi context}) over all the sampled pairs $(c_1, \mathcal{C}_2)$, is optimized via stochastic gradient descent.

\subsubsection{Model Architecture}
The covariate likelihood ratio $\omega(x,c_1,\mathcal{C}_2)$ has the property of being permutation-invariant with respect to the elements of the set $\mathcal{C}_{2}$. 
To ensure this condition, we adopt the architecture as shown in Fig. \ref{mlwcp_mix}. Accordingly, following a DeepSet architecture \cite{zaheer2018deepsets}, all input vectors share the parametric function $g_{\theta}(x,c)$ introduced in (\ref{architecture}), and we set
\begin{equation}
    \omega_{\theta}(x,c_1,\mathcal{C}_2)=\exp\bigg(g_\theta(x,c_1)-\frac{1}{|\mathcal{C}_2|}\sum_{c \in \mathcal{C}_2}g_{\theta}(x,c)\bigg).
    \label{multi_parametric_ft}
\end{equation}

\section{Simulation results} 
\label{Simulation results}
In this section, we present experimental results to validate the performance of ML-WCP. We consider the tasks described in Section \ref{ex_wireless communication}, encompassing the application-layer problem of traffic classification, the task of profiling a third-party scheduling app at the MAC layer, and the phsical-layer problem of list decoding.
\subsection{Benchmarks}
We consider as benchmarks the following schemes:
\begin{itemize}
    \item \textit{Top-$K$ prediction}, which generates a prediction set encompassing the labels that correspond to the $K$ highest scores of the pre-trained probabilistic model \cite{Angelopoulos2020Top-k};
    \item \textit{CP}, described in Section \ref{Conformal Prediction}, which disregards the covariate shift between calibration and testing data sets;
    \item \textit{CCP}, discussed in Section \ref{conservative conformal prediction}, which applies CP with a more conservative miscoverage target based on an estimate of the covariate likelihood ratio;
    \item \textit{Ideal WCP}, which uses the the true covariate likelihood ratio $w(x,c^\text{te},c^\text{cal})$;
    \item \textit{ML-WCP}, which we have introduced in Section \ref{Meta-Learned Context-Dependent Weighted Conformal Prediction}.
\end{itemize}

Note that Ideal WCP is only applicable in an ideal scenario with full distributional information so that the true covariate likelihood ratio can be evaluated. Furthermore, to implement CCP, we note that the TV distance in (\ref{tv}) can be expressed as
\begin{equation}
\begin{aligned}
    \lVert p(x|c^\text{te})-p(x|c^\text{cal})\rVert_\text{TV}&=\frac{1}{2}\int| p(x|c^\text{te})-p(x|c^\text{cal})\big|\mathrm{d}x\\&=\frac{1}{2}\int\bigg|\frac{p(x|c^\text{te})}{p(x|c^\text{cal})} -1\bigg|p(x|c^\text{cal})\mathrm{d}x\\&=\frac{1}{2}\mathbb{E}_{x \sim p(x|c^\text{cal})}\big| w(x,c^\text{te},c^\text{cal})-1 \big|.
\end{aligned}
\end{equation}
Thus, the TV distance can be estimated by replacing the covariate likelihood ratio $w(x,c^\text{te},c^\text{cal})$ with the model $\omega_{\theta}(x,c^\text{te},c^\text{cal})$, which is meta-trained as discussed in Section \ref{offline meta-learning}. 

\subsection{Implementation}
For the single-context ML-WCP, introduced in Section \ref{Meta-Learned Context-Dependent Weighted Conformal Prediction}, as well as for CCP, the covariate likelihood ratio estimator $\omega_{\theta}(x,c_1,c_2)$ is implemented via the model (\ref{architecture_eq}), in which function $g_{\theta}(x,c)$ is an MLP consisting of a fully connected network, followed by four hidden layers with ReLU activation. To minimize the loss (\ref{loss}), we use an Adam optimizer \cite{Kingma2014adam} with learning rate $\eta=0.001$ and weight decay $\lambda=10^{-5}$.

For the multi-context ML-WCP scheme presented in Section \ref{ML-WCP with Multi-Context Calibration}, ML-WCP-MV follows the same settings as the single-context ML-WCP, while ML-WCP-Mix adopts the function $g_{\theta}(x,c)$ in (\ref{multi_parametric_ft}) with the same configuration as the single-context ML-WCP and minimizes the loss (\ref{loss multi context}) with learning rate $\eta=0.005$. 

For all applications, the score $S(x[i],y[i])$ assigned to an input-output pair $(x[i],y[i])$ is the negative log-likelihood probability of the corresponding predictive model.
\subsection{Evaluation}
We adopt as performance measures the empirical coverage
and empirical inefficiency, which are defined as the empirical estimates of the metrics (\ref{set_predictor}) and (\ref{inefficiency}), respectively, as
\begin{equation}
   \text{Empirical coverage}=\frac{1}{|\mathcal{D}_{c^\text{te}}|}\sum_{i=1}^{|\mathcal{D}_{c^\text{te}}|}\mathbbm{1}(y^\text{te}[i]\in \Gamma_{\alpha}\big(x^\text{te}[i]|c^\text{te},\mathcal{C}^\text{cal})\big)
    \label{empir_cov}
\end{equation}
and
\begin{equation}
    \text{Empirical inefficiency}=\frac{1}{|\mathcal{D}_{c^\text{te}}|}\sum_{i=1}^{|\mathcal{D}_{c^\text{te}}|}\big|\Gamma_{\alpha}(x^\text{te}[i]|c^\text{te},\mathcal{C}^\text{cal})\big|,
    \label{empir_ineff}
\end{equation} 
based on a test data set $\mathcal{D}_{c^\text{te}}=\{c^\text{te},x[i],y[i]\}_{i=1}^{|\mathcal{D}_{c^\text{te}}|}$. 

For the case of a single calibration context $\mathcal{C}^\text{cal}=\{c^\text{cal}\}$, unless stated otherwise, we average the empirical coverage (\ref{empir_cov}) and empirical inefficiency (\ref{empir_ineff}) over independent draws of all pairs of context-dependent calibration and test data set $\{\mathcal{D}_{c^\text{te}},\mathcal{D}_{c^\text{cal}}\}$, where $(c^\text{te},c^\text{cal})\in\mathcal{C}^\text{tr}\times \mathcal{C}^\text{tr}$, and $c^\text{te}\neq c^\text{cal}$. We also consider the optimized selection (\ref{single_cal_select}) of single calibration contexts and (\ref{multi_cal_select_fixed}), (\ref{multi_cal_select_adapt}) of multiple calibration contexts, where $d(c_1,c_2)=1-(c_1 \cdot c_2)/(\lVert c_1 \rVert \lVert c_2 \rVert )$ \cite{huang2008cossimilar}. When not specified otherwise, ML-WCP refers to the single calibration context version described in Section \ref{Meta-Learned Context-Dependent Weighted Conformal Prediction}.

The simulation results are visualized using the standard box plot method \cite{krzywinski2014boxplot}. The boxes represent the 25\% (lower edge), 50\% (solid line within the box), and 75\% (upper edge) percentiles of the empirical performance metrics evaluated over different experiments, with the average value shown with a star marker.



\subsection{Traffic Slice Prediction} 
For the application-level traffic prediction introduced in Section \ref{ex_traffic prediction}, we leverage a real-world 5G dataset described in \cite{Chowdhury2024Megatron,Chowdbury[2024]TrafficPred.}. The goal is to calibrate a transformer-based traffic slice classifier xApp, assigning the observed traffic trace to one of $|\mathcal{Y}|=4$ labels: eMBB, mMTC, URLLC, or control (ctrl) traffic \cite{Petar2018Traffic}. The input $x$ consists of sequential KPIs processed in groups of $16$ consecutive time samples per KPI, which are treated as tokens fed into the transformer encoder layer \cite{Ashist2017Attention}. The resulting representations are flattened and fed to a fully connected layer with $256$ neurons
followed by ReLU activation. 
The transformer model hyperparameters follow the settings in \cite{Chowdhury2024Megatron} and the final accuracy of the prediction model we chose reaches $82.2\%$.

For each traffic slice, the data set provided by \cite{Chowdbury[2024]TrafficPred.} includes real-world contextual information such as location and mobility, which are designated as the context vector $c$ for use in our context-dependent set predictor framework. Accordingly, the context set $\mathcal{C}$ is given by the Cartesian product $\mathcal{C}=\mathcal{C}_0\times \mathcal{C}_1$, with $\mathcal{C}_0=\{\text{residential, campus, mixed}\}$, and $\mathcal{C}_1=\{\text{stationary, driving, walking}\}$. Each experiment samples $|\mathcal{D}_c|=1000$ data points for each context not used for training to obtain calibration and test pairs.

\begin{figure*}[t]
   \centering
   \hspace*{+0.75cm}
    \includegraphics[width=1.05\linewidth]{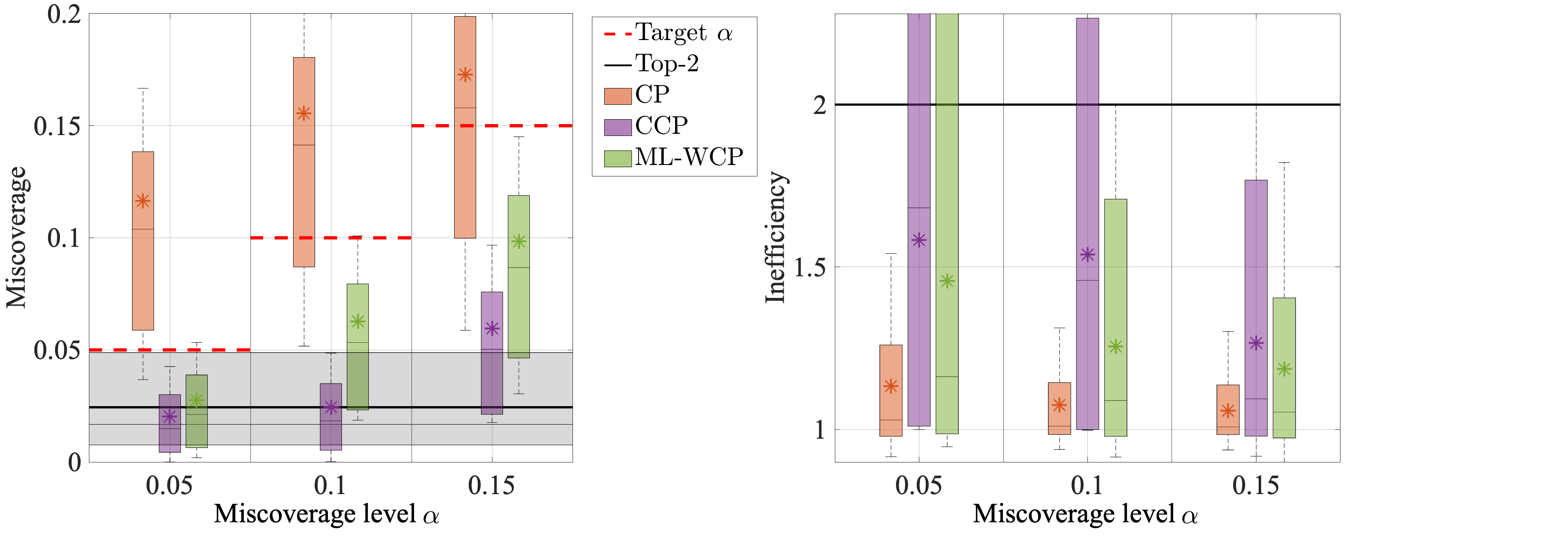}
    \caption{Empirical coverage and inefficiency of Top-2 prediction, CP, CCP, and ML-WCP versus the miscoverage level $\alpha$ with the number of meta-learning contexts $|\mathcal{C}^\text{tr}|=4$.}
    \label{app1_alpha}
\end{figure*}

Fig. \ref{app1_alpha} examines the performance of Top-2 prediction, CP, CCP, and ML-WCP as a function of the miscoverage level $\alpha$ with the number of meta-learning contexts fixed at $|\mathcal{C}^\text{tr}|= 4$. The analysis is conducted with calibration context $c^\text{cal}=(\textnormal{mixed, driving})$ and test context $c^\text{te}=(\textnormal{residential, stationary})$. Note that for the Top-2 method, the average miscoverage level, and its associated range of values represented as a gray bar, are constant as a function of $\alpha$. The results indicate that CP fails to achieve the target coverage level under covariate shift, whereas CCP and ML-WCP guarantee coverage within the desired level $\alpha$.

With lower miscoverage levels, CCP and ML-WCP exhibit high inefficiency variabilities. This observation aligns with the findings in \cite{Tibshirani2019Dist_shift}, which suggest that prediction sets based on covariate likelihood ratios prioritize coverage guarantees, often requiring larger set sizes in certain configurations to achieve reliability. However, as the miscoverage requirement is relaxed, i.e., $\alpha\geq0.1$, the inefficiency of ML-WCP decreases significantly. For example, with $\alpha=0.1$, the average inefficiency of ML-WCP is $1.26$, while for CCP one obtains $1.54$.

\subsection{Profiling Medium Access Control Scheduling Apps}
Consider now the MAC layer problem described in Section \ref{ex_mac scheduling}, in which the objective is to calibrate a pre-trained predictive model to estimate the action of a third-party scheduling AI app. The 5G-based simulation environment is configured as in \cite{alvarovalcarce2020Nokia}. Specifically, the number of downlink resource blocks is set to $25$, and each episode begins with $|\mathcal{Y}|=32$ randomly distributed UEs. The buffer size of each UE is limited to $8$ packets. The episode is divided into $10,000$ transmission time intervals (TTIs), and we monitor the scheduling actions every $250$ TTIs. At the beginning of each TTI, new packets are randomly generated for each UE and allocated to its buffer for scheduling. The scheduling app is a third-party AI-driven system designed to allocate downlink resource blocks to UEs based on their backlog and CQI \cite{Ana2020RLScheduler}.

The predictive model takes as input $x$ the current backlogs and the corresponding CQIs while the target output $y \in \mathcal{Y}$ represents the action of selecting which UE to schedule. The predictive model is implemented with a fully connected network with two hidden layers of $256$ neurons with ReLU activation.

The context $c$ encapsulates the information about the backlogs and CQIs at the previous scheduling interval. In this way, $|\mathcal{D}_c|=150$ data points are generated for each of $|\mathcal{C}| = 40$ contexts. Out of these, up to $|\mathcal{C}^\text{tr}| = 20$ contexts are allocated for meta-learning, with the remaining contexts are reserved for evaluating the calibration performance.

\begin{figure*}[t]
   \centering
   \hspace*{+0.75cm}
    \includegraphics[width=1.05\linewidth]{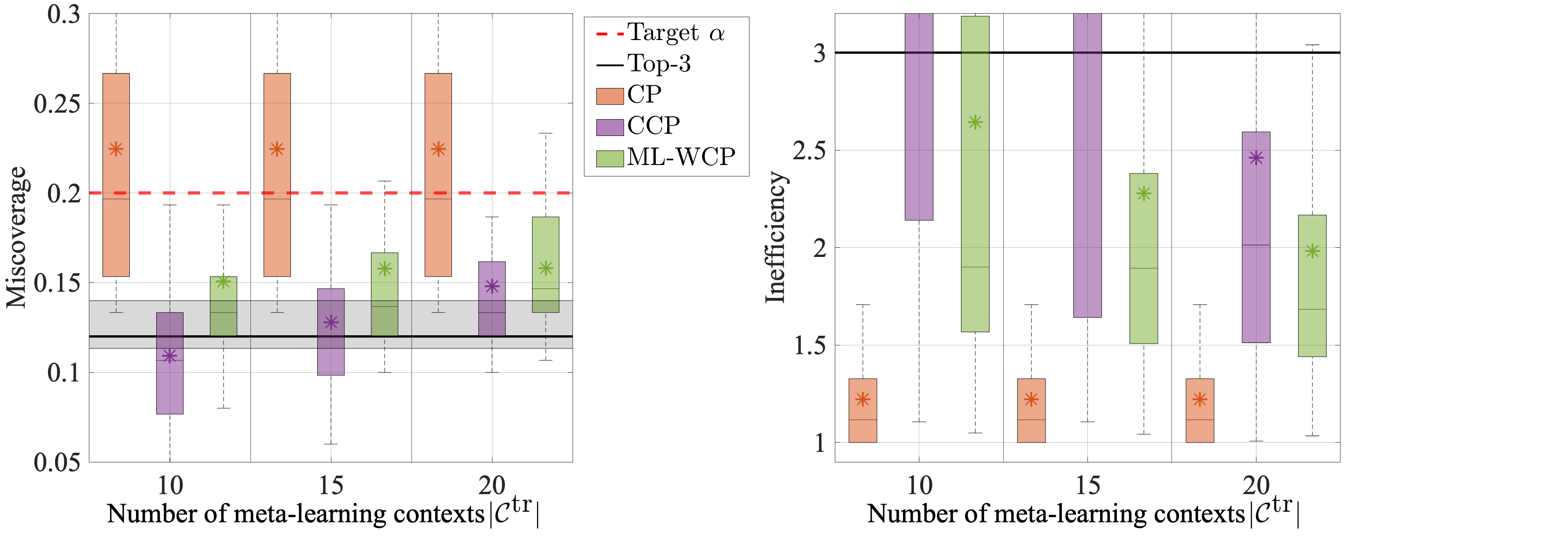}
    \caption{Empirical coverage and inefficiency of Top-3 prediction, CP, CCP, and ML-WCP versus the number of meta-learning contexts with target miscoverage level $\alpha=0.2$.}
    \label{app2_metatrain}
\end{figure*}

In Fig. \ref{app2_metatrain}, we evaluate the performance of ML-WCP against the benchmarks given by Top-3 prediction, CP, and CCP as a function of the number of meta-learning contexts $|\mathcal{C}^\text{tr}|$. We set the target miscoverage level as $\alpha=0.2$, observing that the Top-3 prediction method ensures sufficient coverage with this choice. The results reported in the figure confirm that, owing to covariate shift, the CP method fails to provide the coverage guarantee (\ref{cp_coverage}). In contrast, the alternative methods consistently maintain the miscoverage level below the target $\alpha$. For both CCP and ML-WCP, the miscoverage level approaches the target level $\alpha$ as the number of meta-learned contexts increases, while the inefficiency decreases. However, CCP shows higher variability in the inefficiency, possibly returning uninformative prediction sets. In contrast, ML-WCP consistently achieves reliable coverage while maintaining an average inefficiency level below $3$.

\begin{figure}
    \centering
    \hspace*{+0.75cm}
    \includegraphics[width=0.6\linewidth]{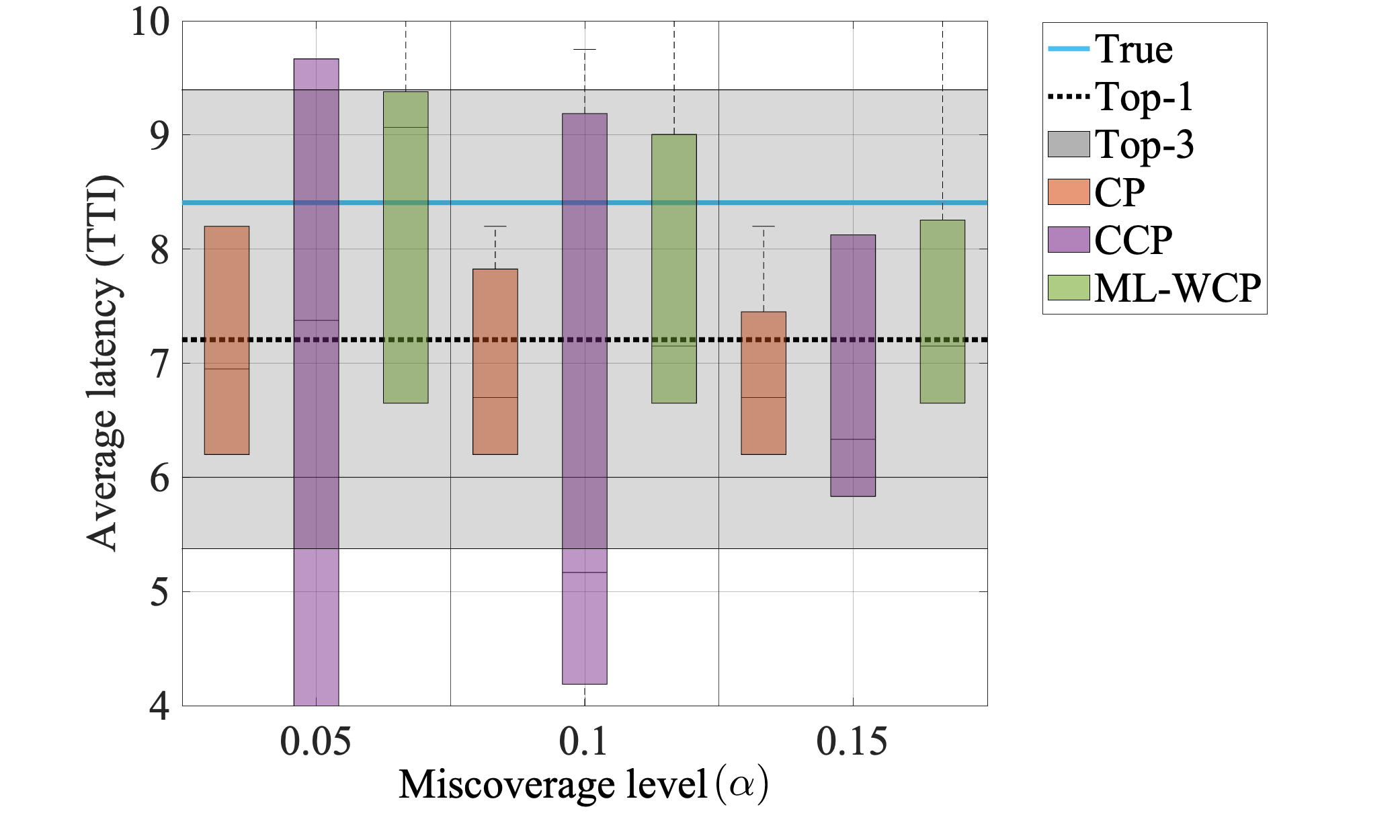}
    \caption{Average latency predicted by Top-1 and Top-3 methods, as well as by CP, CCP, and ML-WCP, versus the miscoverage level $\alpha$. The solid line represents the true average latency.}
    \label{app2_alpha_petar}
\end{figure}

As an application of prediction sets, we consider now the problem of estimating the KPI levels attained by the third-party scheduling app using the pre-trained model. We specifically focus on the average latency, which is defined as the average number of TTIs required to deliver a packet intended for any of the UEs. The app is executed over 10 TTIs. As shown in Fig. \ref{app2_alpha_petar}, we compare the performance of Top-1 and Top-3 prediction, as well as of CP, CCP, and ML-WCP, against the true values calculated by running the actual app. For Top-3 prediction, CP, CCP, and ML-WCP, the estimate is obtained by sampling scheduling actions uniformly from the prediction set. The performance is averaged over $30$ test instances, and by producing $5$ random estimates for each test instance. The figure shows the band of estimated values for each scheme.

By construction, Top-3 prediction obtains estimation intervals that do not depend on the miscoverage level $\alpha$. In contrast, the CP, CCP, and ML-WCP schemes exhibit decreasing interval sizes as the miscoverage rate $\alpha$ increases due to the reduction in the size of the prediction sets. Top-1 and CP always underestimate the latency. Furthermore, both CCP and ML-WCP produce intervals containing the true average latency as long as the miscoverage rate is set as $\alpha \leq 0.1$. However, CCP produces wide estimation intervals, while ML-WCP yields significantly small intervals.

\begin{figure*}
    \centering
    \hspace*{-0.75cm}\includegraphics[width=1\linewidth]{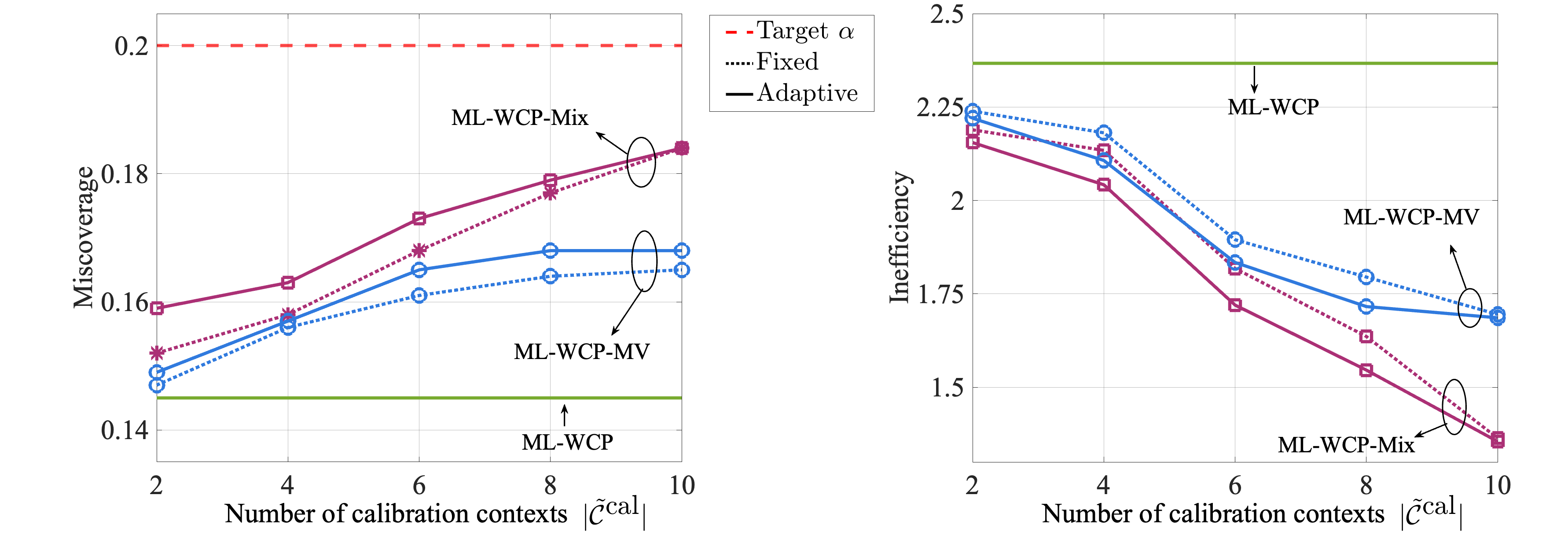}
    \caption{Empirical coverage and inefficiency of ML-WCP, ML-WCP-MV, and ML-WCP-Mix as a function of the average number of calibration contexts $|\tilde{\mathcal{C}}^\text{cal}|$ under different context selection methods, with a target miscoverage level of $\alpha=0.2$.}
    \label{app2_multicontext}
\end{figure*}

In Fig. \ref{app2_multicontext}, we present the performance of the \textit{multi-context} ML-WCP schemes as a function of the number of selected calibration contexts $|\tilde{\mathcal{C}}^\text{cal}|$, we have a total number $|\mathcal{C}^\text{cal}|=10$ of calibration contexts and $|\mathcal{C}^\text{tr}|=20$ meta-learning contexts. We consider fixed selection schemes that always choose $|\tilde{\mathcal{C}}^\text{cal}|$ calibration contexts using (\ref{multi_cal_select_fixed}), as well as adaptive schemes that use $|\tilde{\mathcal{C}}^\text{cal}|$ calibration contexts on average by using the selection strategy (\ref{multi_cal_select_adapt}). In practice, adaptive methods vary the average number of selected calibration contexts by modifying the threshold $\epsilon$ in (\ref{multi_cal_select_adapt}).

The figure reveals that adaptive selection, which flexibly adjusts the number of calibration contexts for each test context, consistently achieves lower inefficiency than methods based on a fixed number of calibration contexts. Compared to ML-WCP, which considers a single calibration context, the multi-context methods ML-WCP-MV and ML-WCP-Mix achieve a lower inefficiency while maintaining coverage guarantees. Notably, the performance gap becomes more pronounced as the number of calibration contexts $|\tilde{\mathcal{C}}^\text{cal}|$ increases. In addition, ML-WCP-Mix outperforms ML-WCP-MV regardless of the multi-context selection method. This confirms the benefits of jointly leveraging multiple calibration contexts via the mixture distribution (\ref{mixture_dist}).

\subsection{Interference-Limited Phyiscal-Layer Communication} \label{Interference-Limited Communication}

As a first application, we consider the interference-limited communication scenario described in Section \ref{ex_interference_limited_communication}. In this problem, as depicted in Fig. \ref{application}(c), the received
signal $x$ serves as the input, while the corresponding sequence of encoded information bits corresponds to the target variables $y$. The decoder serves as the underlying model that assigns scores $S(x,y)$ to the possible information messages $y$ given the received signal $x$ \cite{tse2005fundamentals}. In this experiment, the Bluetooth Low Energy (BLE) 5 standard serves as a framework for validating the described scenario \cite{heydon2012bluetooth,Morais2023bluetooth}. Accordingly, we consider short blocks of $8$ information bits encoded by a convolutional encoder and modulated with 4-QAM, resulting in $8$ symbols. 

These symbols pass through an AWGN channel, becoming the input to the decoder, which selects a prediction set from the $|\mathcal{Y}|=256$ possible information messages of $8$ bits. Thus, the prediction set generated by the pre-trained predictive model can be viewed as the output of a list decoder, designed to include a set of candidate values \cite{correia2024cp_listdecoding}. Miscoverage corresponds to the list decoding error, defined as instances where the true value is absent from the selected set. Inefficiency is the average list size of the selected candidate messages. 

In the presence of interference, which is indicated as $I_b=1$, burst noise is modeled as AWGN with a higher noise power. The starting time $T_0$ of the interference and the duration $T_b$ follows a uniform distribution. Specifically when interference is present, the signal-to-interference-plus-noise ratio (SINR),
\begin{equation}
    \textrm{SINR}=\frac{\textrm{SNR}}{1+\textrm{INR}},
\end{equation}
is smaller than the SNR level by a multiplicative factor that depends on the interference-to-noise ratio (INR). We primarily set $\textrm{SNR}=1$ dB and $\textrm{INR}=-7.5$ dB, yielding $\text{SINR}=0.3$ dB.

To explore the impact of the informativeness of the context vector $c$, we focus on the following options.
\begin{itemize}
    \item \textit{Least informative context}: The context variable $c=(p_b,T_b)$ includes the probability $p_b$ of occurrence of an interference burst in a frame and the duration $T_b$ of the burst.
    \item \textit{Moderately informative context}: The context variable $c=(I_b,T_b)$ includes a binary indicator $I_b \in \{0,1\}$ for whether interference occurs, $I_b=1$ or not, $I_b=0$, along with the duration $T_b$ of the interference burst.
    \item \textit{Most informative context}: The context variable $c=(I_b,T_b,T_0)$ encompasses also the starting time $T_0$ of the interference burst.
\end{itemize}

Note that, in this simple example, the exact covariate likelihood ratio can be derived as described in Appendix B, making it possible to compare the performance of all the schemes against Ideal WCP.

\begin{figure*}[t]
   \centering
   \hspace*{+0.75cm}   
    \includegraphics[width=1.05\linewidth]{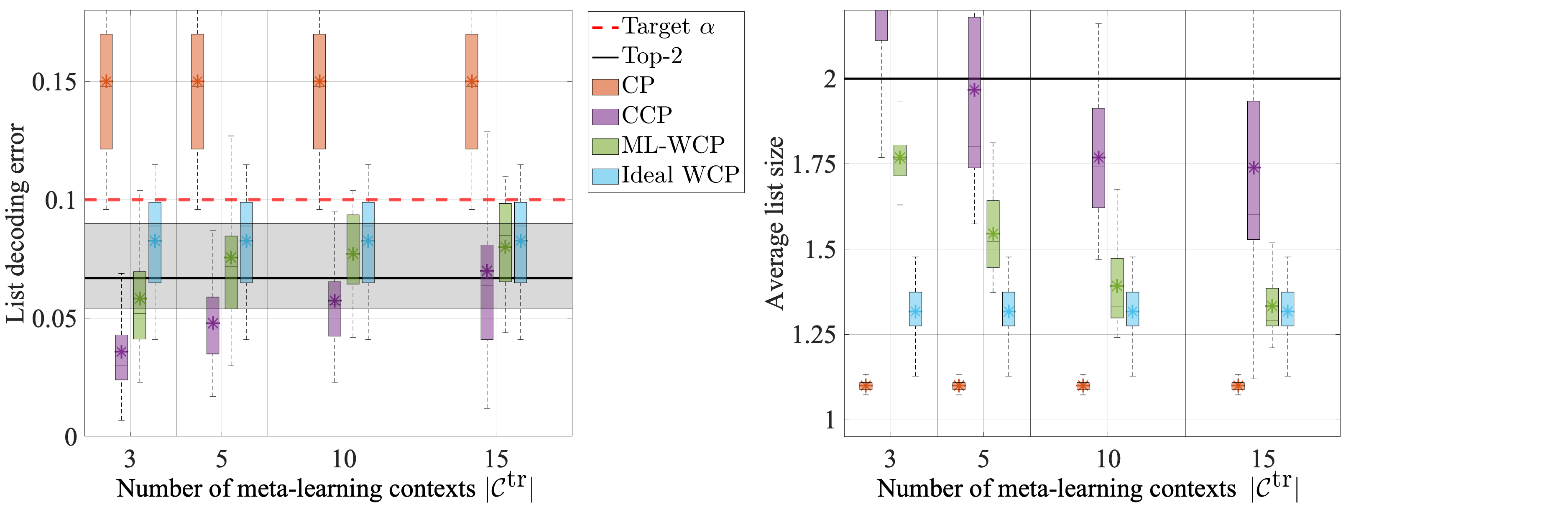}
    \caption{Empirical list decoding error and average list size of Top-2 prediction, CP, CCP, ML-WCP, and Ideal WCP versus the number of meta-learning contexts with most informative context vector, target miscoverage level $\alpha=0.1$.}
    \label{app3_numberofcontexts}
\end{figure*}
In Fig. \ref{app3_numberofcontexts}, we demonstrate the performance of the considered set predictors as a function of the number of meta-learning contexts $|\mathcal{C}^\text{tr}|$ assuming the availability of the most informative context. The figure reveals that CP fails to meet the target miscoverage $\alpha$, yielding an average list decoding error of $0.151$. The Top-2 and CCP methods demonstrate a lower list decoding error than the target $\alpha$, but at the cost of a higher average list size.  In contrast, ML-WCP gradually aligns with the performance of Ideal WCP as the number of meta-learned contexts increases.

\begin{figure*}[t]
   \centering 
   \hspace*{+0.75cm}
    \includegraphics[width=1.05\linewidth]{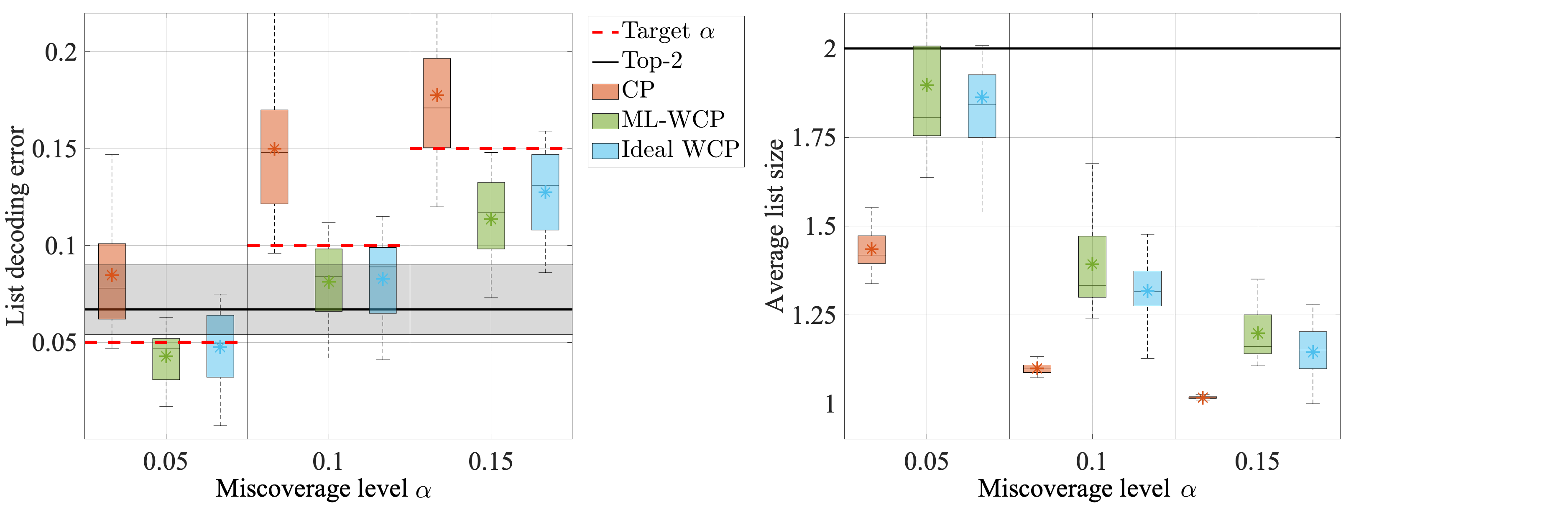}
    \caption{Empirical list decoding error and average list size of Top-2 prediction, CP, ML-WCP, and Ideal WCP versus the miscoverage level $\alpha$ with most informative context case, the number of meta-learning contexts $|\mathcal{C}^\text{tr}|=10$.}
    \label{app3_alpha}
\end{figure*}

In Fig. \ref{app3_alpha}, we evaluate the empirical list decoding error and average list size with respect to the target miscoverage level $\alpha$ with $|\mathcal{C}^\text{tr}|=10$ meta-learning contexts and with the most informative context. For the Top-$2$ prediction method, the list decoding error remains negligible across all values of $\alpha$, and the list size is consistently fixed at $2$. In contrast, for other methods, an increase in the target miscoverage level $\alpha$ results in smaller prediction sets. Unlike standard CP, which fails to meet the target reliability level, the proposed ML-WCP achieves the coverage, closely matching the average list size of the Ideal WCP. For instance, at $\alpha=0.15$, the average list size of the ML-WCP is $1.19$, while that of Ideal WCP is $1.15$, demonstrating a minimal gap.

\begin{figure*}[t]
   \centering
   \hspace*{+0.01cm}
    \includegraphics[width=1.05\linewidth]{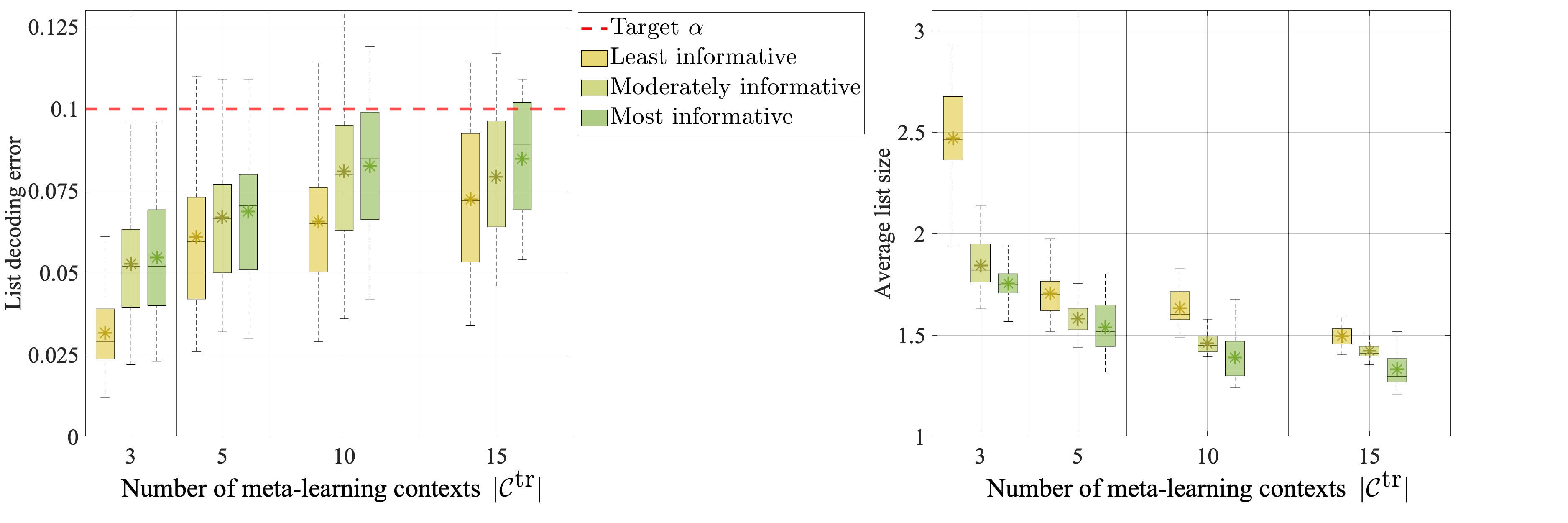}
    \caption{Empirical list decoding error and average list size of ML-WCP for different levels of informativeness of the context vector versus the number of meta-learning contexts with target miscoverage level $\alpha=0.1$.}
    \label{app3_informative}
\end{figure*}

Furthermore, we investigate the impact of the informativeness of the context vector as a function of number of meta-learning contexts $|\mathcal{C}^\text{tr}|$. As shown in Fig. \ref{app3_informative}, when the number of meta-learning contexts is limited, i.e., $|\mathcal{C}^\text{tr}|=3$, there are notable performance gaps in list decoding error and average list size performance depending on the level of context informativeness. Specifically, incorporating detailed context features, such as interference burst occurrence and start time, proves crucial for improving performance under constrained meta-learning settings. As the number of meta-learning contexts increases, when $|\mathcal{C}^\text{tr}|=15$, the average list size stabilizes under $1.5$, demonstrating the stability and robustness of ML-WCP across varying levels of contextual information.

\begin{figure*}[t]
   \centering
   \hspace*{+0.5cm}
    \includegraphics[width=1.05\linewidth]{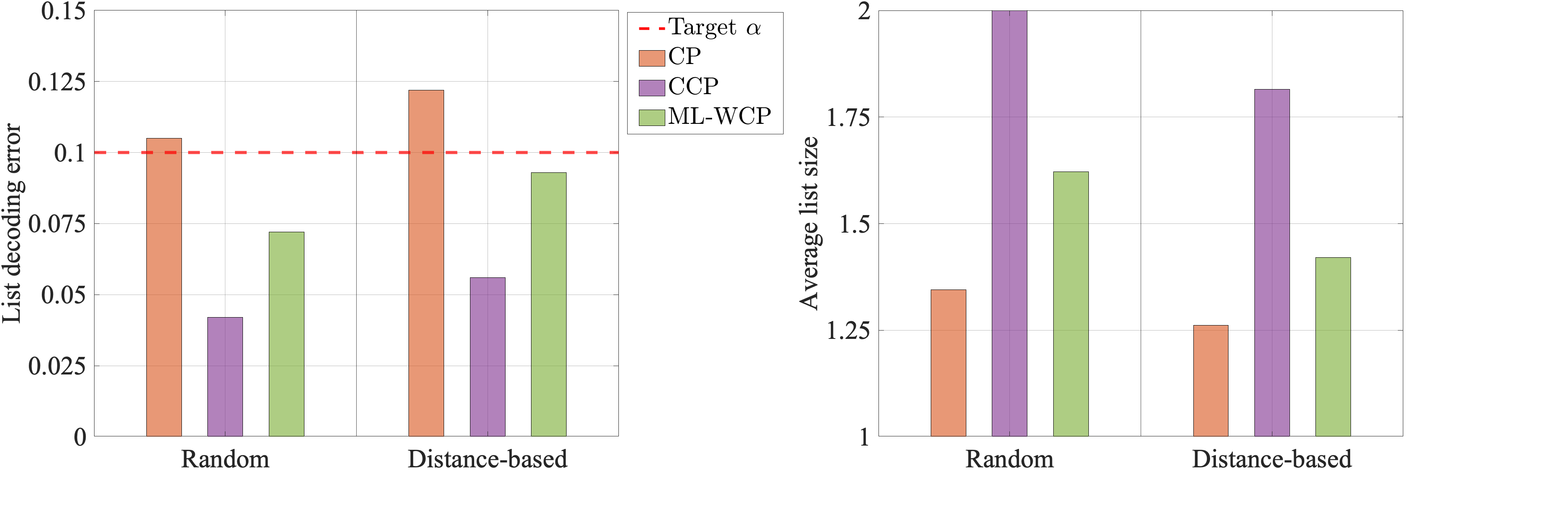}
    \caption{Empirical list decoding error and average list size of CP, CCP, and ML-WCP for different context selection methods with target miscoverage level $\alpha=0.1$.}
    \label{app3_context_select}
\end{figure*}

Fig. \ref{app3_context_select} evaluates the runtime calibration performance when applying different context selection methods. Specifically, we compare the random selection method adopted so far with the distance-based selection metric presented in Section \ref{calibration data}. Across all methods, distance-based selection is seen to lead to reduced prediction set sizes compared to random selection. Notably, CCP shows significant performance variability depending on the context selection method. For the random selection method, CCP achieves the list decoding error below $0.05$ but results in significantly large prediction set sizes. In contrast, the proposed ML-WCP achieves target coverage reliability condition (\ref{wcp_coverage}) when using distance-based selection, closely approaching the actual minimum average list size of $1.42$.

\section{Conclusion} \label{Conclusion}
Pre-deployment calibration is instrumental in facilitating the adoption of AI models in wireless systems, which are characterized by dynamic and diverse network conditions. Calibration can ensure performance guarantees via single post-hoc mechanisms that augment decisions with prediction sets. Existing state-of-the-art calibration methods based on the WCP framework require knowledge of the distribution shift between calibration and runtime operation. However, this is not available in practical wireless settings. The proposed ML-WCP leverages contextual information to estimate the distribution shift without requiring runtime data, and it can be extended to integrate multi-context information by leveraging calibration data from multiple contexts via meta-learning. To validate the reliability and performance of ML-WCP, we demonstrated its application across three wireless scenarios operating at the network layer, MAC layer, and physical layer.

Future research may focus on enhancing ML-WCP by supporting the online optimization of scoring functions, or providing deterministic coverage guarantees under adversarial network conditions.

\appendix
\renewcommand{\qedsymbol}{$\square$}
\subsection{Proof of Lemma 2}

Let the estimated target distribution ${p}_{\theta}(x|c^\text{te})$ be obtained for the estimated covariate likelihood ratio $\omega(x,c^\text{te},c^\text{cal})$ as
\begin{equation}
{p}_{\theta}(x|c^\text{te})=\frac{\omega_{\theta}(x,c^\text{te},c^\text{cal})}{\mathbb{E}_{x \sim p(x|c^\textnormal{cal})}[\omega_{\theta}(x,c^\textnormal{te},c^\textnormal{cal})]}\cdot p(x|c^\text{cal}).
\end{equation}
Due to the gap between the coverage under the estimated target distribution ${p}_{\theta}(x|c^\text{te})$ and the true target distribution $p(x|c^\text{te})$, which is quantified by the total-variation distance, the coverage of ML-WCP satisfies the inequality
\begin{equation}
\begin{aligned}
    \text{Pr}\big[y^\text{te} \in \Gamma_{\alpha}^\text{ML-WCP}(x^\text{te}|c^\textnormal{te},c^\textnormal{cal})\big]&\geq 1-\alpha-\frac{1}{2}\int\big|{p}_{\theta}(x|c^\text{te})-p(x|c^\text{te})\big|\mathrm{d}x\\
    &=1-\alpha-\frac{1}{2}\int\bigg|\frac{\omega_{\theta}(x,c^\textnormal{te},c^\textnormal{cal})}{\mathbb{E}_{x \sim p(x|c^\textnormal{cal})}[\omega_{\theta}(x,c^\textnormal{te},c^\textnormal{cal})]}-w(x,c^\text{te},c^\text{cal})\bigg|p(x|c^\text{cal})\mathrm{d}x\\
    &=1-\alpha-\frac{1}{2}\mathbb{E}_{x\sim p(x|c^\text{cal})}\bigg|\frac{w_{\theta}(x,c^\textnormal{te},c^\textnormal{cal})}{\mathbb{E}_{x \sim p(x|c^\textnormal{cal})}[\omega_{\theta}(x,c^\textnormal{te},c^\textnormal{cal})]}-w(x,c^\textnormal{te},c^\textnormal{cal})\bigg|.
\end{aligned}
\end{equation}

\subsection{Conditional Distribution in Section \ref{Interference-Limited Communication}}
In this subsection, we specify the details for deriving the conditional distribution $p(x|c)$ in the interference-limited wireless communication scenario studied in Section \ref{Interference-Limited Communication}. We focus on the setting with 4-QAM modulation. The received complex signal $x$ consists of $T$ elements, and each $t$-th element is denoted by $x_t$. The $t$-th transmitted signal $s_t$, is selected from the set $\mathcal{S}\triangleq \{(1+j,1-j,-1+j,-1-j)/\sqrt{2}\}$. The conditional distribution $p(x|c)$ is then given by
\begin{equation}
\begin{aligned}
   p(x|c)&=\prod\limits_{t=1}^{T}p(x_t|c)
\\ &= \sum_{s_t \in \mathcal{S}}p(x_t|s_t,c)p(s_t|c). 
\end{aligned}
\end{equation}
Assuming that $s_t$ is independent of context $c$, we write $p(s_t|c)= p(s_t)=1/4$ for each symbol. Further, under the assumption that the context $c$ fully represents interference information, $p(x_t|s_t,c)$ is specified by
\begin{equation}
    p(x_t|s_t,c) = \mathbbm{1}(t\in[T_0,T_b+T_0])\mathcal{N}(s_t,\sigma_0^2+\sigma_1^2)+\mathbbm{1}(t\notin[T_0,T_b+T_0])\mathcal{N}(s_t,\sigma_0^2),
\end{equation}
where $\sigma_0^2$ and $\sigma_1^2$ represent the variances of noise and interference, respectively. Similar derivations apply to all the cases studied in Section \ref{ex_wireless communication}.

\bibliographystyle{IEEEtran}
\bibliography{ref}

\end{document}